\newcolumntype{M}[1]{>{\centering\arraybackslash}m{#1}}
\newcolumntype{N}{@{}m{0pt}@{}}
\newcommand{\DATA}{ \mathcal{D}}
\newcommand{\GIVEN}{ \, | \,}
\newcommand{\REAL}{ \mathbb{R} }
\newcommand{\HESSIAN}{ \mathcal{H} }
\newcommand{\FIM}{ \mathcal{I} }
\newcommand{\COMMA}{ \; ,}
\newcommand{\PERIOD}{ \; .}
\newcommand{\NORMAL}{ \mathcal{N} }
\newcommand{\UNIFORM}{ \mathcal{U} }
\newcommand{\PARTIAL}[1]{  \frac{\partial}{\partial #1}  }
\newcommand{\MEANN}[2]{ \mathbb{E}_{ #1 } \left [  #2 \right ] }
\newcommand{\PARTIALTWO}[2]{  \frac{\partial^2}{\partial #1 \partial #2 }  }
\newcommand{\cov}{pseudo covariance}
\newcommand{\SECD}[1]{\frac{\partial^2}{\partial {#1}^2}}
\DeclareRobustCommand{\ONE}{\text{\usefont{U}{bbold}{m}{n}1}}
\newtheorem{proposition}{Proposition}
\title{Langevin Diffusion for Population Based Sampling with an Application in Bayesian Inference for Pharmacodynamics}
\author{  Georgios Arampatzis,
 Daniel W\"alchli, Panagiotis Angelikopoulos\thanks{currently at D. E. Shaw Research L.L.C, 10036 New York, NY, USA} , 
 Stephen Wu\thanks{currently at Institute of Statistical Mathematics,  190-8562 Tokyo, Japan} ,
 \\  Panagiotis Hadjidoukas and  Petros Koumoutsakos }
\affil{Professorship for Computational Science, ETH-Zurich, CH-8092, Switzerland}
\date{}
\DeclareMathOperator*{\argmax}{arg\,max}
\DeclareMathOperator*{\DIAG}{diag}
\begin{document}

\maketitle

\begin{abstract}
We  propose an algorithm for the efficient and robust sampling of the posterior probability distribution in Bayesian inference problems. The algorithm combines the local search capabilities of the Manifold Metropolis Adjusted Langevin transition kernels with the advantages of global exploration by a population based sampling algorithm, the Transitional Markov Chain Monte Carlo (TMCMC). The Langevin diffusion process is determined by either the Hessian or the Fisher Information of the target distribution with appropriate modifications for non positive definiteness. The present methods is shown to be superior over other population based algorithms, in sampling probability distributions for which gradients  are  available and is shown  to handle otherwise unidentifiable models. We demonstrate the capabilities and advantages of the method in computing the posterior distribution of the parameters in a Pharmacodynamics model, {for glioma growth and its drug induced inhibition, using clinical data.}
\end{abstract}

\section{Introduction}

The unprecedented availability of experimental and observational data and increased computational power have fueled the re-emergence of Bayesian inference \cite{Jaynes:probability} for quantifying the uncertainty in the predictions of mathematical models. {Bayesian inference is revolutionizing simulation science by integrating mathematical models with data and prior knowledge to enable robust predictions \cite{Ghosh:bayes,Jaynes:probability,OHagan2001} in data rich domains such as Medicine an Drug discovery \cite{Sherman:2017}}. 

The Bayesian framework amounts to adjusting a subjective belief of a computational  model and its parameters using data from the underlying physical process. The resulting posterior probability  can then be used to robustly quantify the uncertainty in the model predictions. The practical value and computational cost of the Bayesian framework is largely determined by the effective sampling of the associated  probability distributions. 
In the last decade a number of algorithms have been proposed to enhance this sampling by improving on the fundamental concept of Markov Chain Monte Carlo  (MCMC) \cite{Betancourt2013,Beck2013,Haario2006,Marjoram2003,Xifara2014}. 
{Of particular interest are Monte Carlo algorithms \cite{Moral2006} that may exploit massively parallel computer architectures.} {While MCMC algorithms are fundamentally  ``sequential''  several algorithms have been proposed in the last twenty years to address their parallel implementation. Among these algorithms  are  the annealed importance sampling \cite{Neal1998}, the power posterior algorithm \cite{Friel2008}, the parallel tempering algorithm \cite{Hukushima1995}, the equi-energy sampler \cite{Kou2006}, and parallel adaptive Metropolis \cite{Solonen2012}.}
The Transitional MCMC algorithm (TMCMC) \cite{Ching2007} and its {improved} version (BASIS \cite{Wu2015}), are among the most prominent population (or particle) based MCMC. They exploit information accumulated in the population to escape local modes in order to  explore effectively multi-modal or highly peaked probability distributions  \cite{Green2015}. 
{In TMCMC, a partial MCMC is performed for each member of the population, via the Metropolis-Hastings (MH) algorithm with Gaussian proposal distribution \cite{Robert:2005}. The MH algorithm with Gaussian proposal distribution and constant covariance matrix corresponds to the discretization of an isotropic diffusion in the parameter space \cite{Roberts2002}. In TMCMC the matrix in the proposal distribution is the scaled sample covariance matrix of the population.}
%Each particle in TMCMC is performing an isotropic diffusion in the parameter space having as diffusion coefficient the scaled sample covariance matrix of the population. This diffusion corresponds to the well known Metropolis-Hastings algorithm with Gaussian proposal distribution \cite{Robert:2005}.
However, in many situations, for instance in multi-modal or close to unidentifiable posterior distributions, the assumption of an isotropic covariance matrix, inherent to TMCMC, is inadequate. We note that sampling based on a non-isotropic covariance has been reported  in the Metropolis Adjusted Langevin Algorithm (MALA) \cite{Roberts2002,Roberts1998}. In MALA, the proposal distribution is obtained from the discretization of a Langevin diffusion with a drift term related to the gradient of the target distribution and an adjustable constant diffusion coefficient. The optimal selection of the respective covariance matrix remains an open problem. In \cite{Girolami2011} the MALA algorithm is combined  with a discretization of the Langevin diffusion on a general manifold. The authors proposed a covariance matrix that is connected with the Hessian or the Fisher information of the target distribution. The use of a covariance matrix is related to the Hessian \cite{Bui-Thanh2012} pertaining to the Newton method and to population based optimization algorithms such as CMA-ES \cite{Hansen2003}. Another class of potent sampling algorithms are the Hybrid or Hamiltonian Monte Carlo (HMC) \cite{Duane1987,Girolami2011} and their extensions on a general manifold \cite{Brubaker2012}. We refer the reader to \cite{Green2015} for an extensive review  in sampling algorithms for Bayesian computations.

In this article we augment the capabilities of the TMCMC algorithm by substituting the isotropic coefficient  with a drift term and a diffusion coefficient that reflect  the local geometry of the posterior distribution. We combine the TMCMC algorithm with Langevin diffusion transition kernels by following the manifold approach presented in \cite{Girolami2011}. We explore the Hessian and/or the Fisher information of the target distribution as local metrics to construct the covariance matrix of the proposal distribution. The proposed  manifold TMCMC (mTMCMC) has been implemented for single core\footnote{Matlab code can be downloaded from \url{http://cse-lab.ethz.ch/software/smtmcmc}} and multicore clusters\footnote{$\Pi4$U can be downloaded from \url{http://www.cse-lab.ethz.ch/software/Pi4U}} \cite{Hadjidoukas2015}. 
Positive definiteness of the local metric  is a key feature of MALA. However, this property is not ensured a-priori for every probability distribution. Moreover, in the presence of non identifiable manifolds in the target distribution, the eigenvalues of the metric become arbitrarily small leading to proposal distributions with ill-conditioned covariance matrices resulting in sampling with  low acceptance rate. In this paper, the non-invertibility is handled by disregarding the metric and using the covariance matrix usually employed  by the TMCMC algorithm. The cases of non positive definiteness and non-identifiability are treated by decomposing the matrix associated with the particular metric and appropriately scaling the problematic eigenvectors. This technique is discussed in detail in  \cref{sec:covariance}.

The proposed algorithm is first tested in a collection of multivariate Gaussian distributions, to showcase the advantages of the proposed metric correction scheme.
The effectiveness of mTMCMC to sample challenging posterior distributions is further demonstrated  in the Bayesian inference of a Pharmacodynamics problem. {The  Pharmacodynamics model describes the evolution of the mean diameter of low grade gliomas \cite{Ribba2012} under different drug therapies. Clinical data} obtained from MRIs of different patients, are used to infer the model parameters. The posterior probability distribution of the parameters is not readily invertible. The TMCMC algorithm \cite{Ching2007} was unable  to reproduce the posterior probability. The presence of at least one  non-identifiable manifold in the posterior distribution is being exposed by the use of the Profile Likelihood technique \cite{Raue2012}. We find that while the TMCMC algorithm exhibited inability to sample the parameter space, the mTMCMC was well capable of exploring the parameter space.  Finally, we demonstrate the ability of mTMCMC to sample areas of high probability by comparing its results with those from CMA-ES, a state of the art population based optimization algorithm \cite{Hansen2003}. 

The paper is organized as follows: in \cref{sec:background} we present background information on Bayesian inference and sampling using TMCMC and manifold MCMC.  In \cref{sec:mtmcmc}  we present the proposed manifold TMCMC algorithm and discuss the implementation details, {and in \cref{sec:numerics} we showcase  the ability of the proposed algorithm to sample multi-modal distributions in the presence of unidentifiable manifolds  in a Pharmacodynamics model.}

\section{Background}\label{sec:background}

We begin with an overview of  Bayesian inference  with sampling by the Transitional Markov Chain Monte Carlo (TMCMC) \cite{Ching2007} and the manifold Metropolis Adjusted Langevin Algorithm (mMALA) \cite{Girolami2011}.

\subsection{Bayesian Inference}

Given a model $f(x;\varphi)$, with $x\in \REAL^{N_x}$ the input vector and $\varphi\in\REAL^{N_\varphi}$ the parameter vector and a data set $\DATA = \{  d_i \GIVEN i=1,\ldots,N_\DATA \}$ our goal is to find parameters $\varphi$ such that $f(x;\varphi)$ is a good approximation to the observations $\mathcal{D}$. Bayes theorem provides a distribution of the model parameters conditioned on the data according to
\begin{equation}
p(\varphi | \mathcal{D})  =  \frac{   p( \mathcal{D} | \varphi)   \; p(\varphi)  }{   p( \mathcal{D})  }  \PERIOD
\end{equation}
The prior distribution $p(\varphi)$ encodes the available information on the parameters prior to observing any data. The denominator, $p(\DATA)$, is  referred to as the evidence of the data and used for model selection \cite{Beck2004}. Under the assumption that the data are independent and normally distributed around the output of the model, we postulate that
\begin{equation}
d_i = f(x_i;\varphi) + \epsilon, \quad \epsilon \sim \NORMAL(0,\sigma_n) \COMMA
\label{eq:likelihood:assumption}
\end{equation}
the likelihood function $p( \mathcal{D} | \vartheta)$ takes the form,
\begin{equation}\label{eq:model:like}
p(\DATA | \vartheta) = \NORMAL(\DATA \GIVEN F(X,\varphi), \sigma_n I)   \COMMA   
\end{equation}
where $\vartheta = ( \varphi^\top , \sigma_n)^\top$ is the parameter vector that contains both, the model and the noise, parameters and $F(X,\varphi) = (f(x_1;\varphi),\ldots,f(x_{N_\DATA};\varphi)   )$. The assumption that the observations are independent is used here in order to simplify the presentation and correlations between the observations  can be included without changing the general formulation presented here.

\subsection{Transitional Markov Chain Monte Carlo (TMCMC)}
The TMCMC is a population based algorithm  for sampling from a sequence of intermediate distributions controlled by  the annealing scheme
\begin{equation}
p_j(\vartheta) \propto p(\DATA | \vartheta)^{\zeta_j} p(\vartheta) \COMMA
\end{equation}
for $j=1,\ldots,m$ and $0=\zeta_1 < \ldots < \zeta_m = 1$, that converges to the posterior distribution $p(\vartheta|\DATA) \propto p(\DATA | \vartheta) p(\vartheta)$ when $\zeta_m = 1$.

The algorithm first  draws $N_1$ samples from the prior distribution. At the $j+1$ stage it uses $N_j$ samples from the distribution  $p_j$ to obtain $N_{j+1}$ samples from the distribution $p_{j+1}$. Let $\Theta_j=\{ \vartheta_{j,k} | k=1,\ldots,N_j \}$ be the samples obtained at the $j$-th step from $p_j$. The following procedure gives samples from $p_{j+1}$:

\begin{enumerate}
\item Draw $N_{j+1}$ samples from the set $\Theta_j$ with probability of the sample $\vartheta_{j,k}$ to be selected equal to 
\begin{equation}
\hat w_{j,k} = \frac{w_{j,k}}{\sum_{k=1}^{N_i} w_{j,k}} \COMMA
\end{equation}
where $w_{j,k} = p(\DATA | \vartheta_{j,k})^{\zeta_{j+1} - \zeta_j}$. Put the new samples in the set $\tilde \Theta_{j+1}$ and set 
\begin{equation}
S_j = \frac{1}{N_j} \sum_{k=1}^{N_j} w_{j,k} \PERIOD
\end{equation}

\item For each sample in $\tilde \Theta_{j+1}$ perform MCMC with Gaussian proposal distribution and covariance matrix $\varepsilon^2\Sigma_{s}^{(j)}$. Here, $\varepsilon$ is a scaling parameter and $\Sigma_s^{(j)}$ is the sample covariance at the $j$-th stage given by,
\begin{equation}\label{eq:weighted:cov}
\Sigma_s^{(j)} = \sum_{k=1}^{N_j} \hat w_{j,k} (\vartheta_{j,k} - \bar \vartheta_j) (\vartheta_{j,k} - \bar \vartheta_j)^\top \COMMA
\end{equation}
where $\bar \vartheta_j = \sum_{k=1}^{N_j} \hat w_{j,k}  \vartheta_{j,k}$.
Set the chain length equal to a predefined parameter $\ell_{max}$.

\end{enumerate}
The pseudocode of the algorithm is presented in  \cref{alg:tmcmc}. In \cite{Wu2015} it was suggested that the MCMC step, described in lines \ref{alg:tmcmc:line:mcmc}-\ref{alg:tmcmc:line:mcmc:end} of  \cref{alg:tmcmc}, results in a bias accumulated in each stage. In the original TMCMC the MCMC is performed for each \textit{unique} sample in $\tilde\Theta_{j+1}$ with chain length equal to number of occurrences of the sample. In \cite{Wu2015} it is shown that in order for the bias to be reduced all samples in  $\tilde\Theta_{j+1}$ should perform an MCMC step with chain length equal to a parameter $\ell_{max}$, which is usually set to 1. The improved algorithm is called BASIS and an efficient implementation can be found in the $\Pi$4U framework \cite{Hadjidoukas2015}.

\begin{algorithm}
\caption{\label{alg:tmcmc} BASIS (TMCMC)}
\begin{algorithmic}[1]
\STATE \textbf{Input:} Likelihood function $p(\DATA | \vartheta)$, prior distribution $p(\vartheta)$
\item[]  \hspace{32pt} $N_j, N_{max}$ -- number of samples per stage, maximum number of stages
\item[]  \hspace{32pt} $\gamma, \varepsilon$ -- threshold parameter, scale parameter
\item[]
\STATE \textbf{Output:} $\Theta_{final}$ -- a set of samples from $p(\vartheta | \DATA)$
\item[]  \hspace{42pt}  S -- estimation for the evidence $p(\DATA)$
\item[]
\STATE Draw initial sample set $\Theta_1=\{ \vartheta_{1,k} | k=1,\ldots,N_1 \}$ from prior
\STATE Initialize $j\gets 1$, $\zeta_1\gets 0$, $S\gets 1$
\REPEAT
\STATE Choose $\zeta_{j+1}$ such that {the coefficient of variation} of $w_{j,k}<\gamma$ and $\zeta_{j+1}\leq 1$ \label{alg:tmcmc:line:schedule}
\STATE Calculate $w_{j,k}$ with the chosen $\zeta_{j+1}$
\STATE $S\gets S \cdot \frac{1}{N_j} \sum_{j=1}^{N_j} w_{j,k}$
\STATE  Obtain $\tilde\Theta_{j+1}$ by drawing $N_{j+1}$ samples from the set $\Theta_{j}$ with probabilities $\propto w_{j,k}$ 

\STATE Set $\Sigma$ the weighted covariance given by \cref{eq:weighted:cov}

\FOR{ each sample in $\tilde\Theta_{j+1}$} \label{alg:tmcmc:line:mcmc}
\STATE Perform MCMC with length equal to $\ell_{max}$ \label{BASISvsTMCMC} and proposal distribution $q(\cdot | \vartheta )=\NORMAL(\cdot | \vartheta, \varepsilon^2\Sigma)$
\STATE Add resulting samples in $\Theta_{j+1}$
\ENDFOR \label{alg:tmcmc:line:mcmc:end}

\STATE $j \gets j+1$

\UNTIL{$\zeta_j=1$ or $j>N_{max}$}

\STATE $\Theta_{final} \gets \Theta_{j}$

\end{algorithmic}
\end{algorithm}

A key advantage of TMCMC is that it can be efficiently  parallelized since the likelihood evaluation is independent for each sample. An additional computational benefit introduced by the BASIS algorithm is that all MCMC chains have equal length and thus the work load can be balanced among the processors. 
Moreover, an important byproduct of the algorithm is that the evidence of the data is {estimated} by
\begin{equation}\label{eq:evidence}
p(\DATA) \approx \prod_{j=1}^{m-1} S_j \COMMA
\end{equation}
where $m$ is the total number of stages in the TMCMC algorithm \cite{Ching2007}. {In fact \cref{eq:evidence} is an unbiased estimator of the evidence \cite{Moral2006}}.
The idea of {estimating} the evidence of the data by \cref{eq:evidence} can also be found in  \cite{Calderhead2009,Lartillot2006} in the context of thermodynamic integration. The main difference  between these algorithms and TMCMC is that in TMCMC the annealing schedule $\{\zeta_j | j=1,\ldots,m\}$, is estimated adaptively according to the scheme described in line \ref{alg:tmcmc:line:schedule} of  \cref{alg:tmcmc}. In the thermodynamic integration the annealing schedule is chosen a priori.

Finally, we note that the parameter $\varepsilon^2$ was proposed in \cite{Ching2007} to be set equal to $0.04$. In  \cite{Betz2016} the authors propose an adaptive choice of $\varepsilon^2$ such that a predefined acceptance rate is achieved. We return to the issue of choosing $\varepsilon^2$ in  \cref{sec:sampling:schemes}.
%For the rest of the paper, in order to be consistent with the notation introduced later, we will denote $\varepsilon^2$ by $\varepsilon$.

%%%%%%%%%%%%%%%%%%%%%%%%%%%%%%%%%%%%%%%%%%%%%%%%%%%%%%%%%%%%%%

\subsection{Manifold Metropolis Adjusted Langevin Algorithms}\label{sec:mMALA}

Let $\pi:\Theta\rightarrow\REAL$ be a probability distribution function where $\Theta\subset\REAL^{N_{\vartheta}}$. The Metropolis-Hastings (MH) algorithm is a Markov Chain Monte Carlo (MCMC) method for obtaining samples from $\pi$ according to the following iterative scheme, starting from sample $\vartheta_0$:
\begin{enumerate}
\item propose a sample $\vartheta^\star$ according to a probability function $q(\vartheta^\star|\vartheta_j)$,
\item set $\vartheta_{j+1}=\vartheta^\star$ with probability 
\begin{equation}\label{eq:acceptance:ratio}
\alpha(\vartheta^\star | \vartheta_j) = \min \left(   \,  1  \,  , \, \frac{\pi(\vartheta^\star)q(\vartheta_j|\vartheta^\star)}{\pi(\vartheta_j)q(\vartheta^\star|\vartheta_j)}  \,  \right ) \COMMA
\end{equation}
 and $\vartheta_{j+1}=\vartheta_j$ with probability $1-\alpha(\vartheta^\star | \vartheta_j)$.
\end{enumerate}
In the limit, the samples obtained using the MH algorithm will be distributed according to $\pi$. The proposal distribution $q$ is usually chosen to be a Gaussian distribution centered at $\vartheta_j$ with covariance matrix $\sigma^2I$, i.e., $q(\cdot |\vartheta_j) = \NORMAL(\cdot | \vartheta_j,\sigma^2 I)$ and $I$ the identity matrix in $\REAL^{N_\vartheta}$. The parameter $\sigma$ needs to be tuned depending on the probability $p$; if $\sigma$ is too small the proposals will be local and the chain will not be able to efficiently explore the parameter space, if $\sigma$ is too big there will many rejected samples leading to slow convergence.

An improved proposal scheme  is based on the observation that the random variable that satisfies the stochastic differential equation (SDE),
\begin{equation}\label{eq:MALA}
d\vartheta_t  =  \frac{1}{2}\nabla \log \pi(\vartheta)\,dt + dW_t \COMMA 
\end{equation}
where $W_t$ an $N_\vartheta$ dimensional Wiener process, has $\pi$ as stationary distribution. Then, the Euler-Maruyama discretization is given by
\begin{equation}\label{eq:discr:MALA}
\vartheta_{n+1} =  \vartheta_{n} + \frac{\varepsilon}{2}  \, \nabla  \log \pi(\vartheta_n) +  \sqrt{\varepsilon}W_n,  \quad W_n\sim\NORMAL(0,I)\COMMA
\end{equation}
where $\varepsilon$ is the time step of the discretization.  Since $\varepsilon$ introduces error, $\pi$ is not anymore the equilibrium distribution of $\vartheta_n$ and thus cannot be sampled directly by solving \cref{eq:MALA}. Instead, the sample $\vartheta_{n+1}$ is used as a proposal in the MH algorithm. In other words, the proposal distribution in the MH algorithm is given by
\begin{equation}
q(\cdot |\vartheta_n) = \NORMAL( \cdot  \GIVEN   \vartheta_{n} + \frac{\varepsilon}{2}  \, \nabla  \log \pi(\vartheta_n), \varepsilon I) \PERIOD
\end{equation}
This scheme is known as the Metropolis Adjusted Langevin Algorithm (MALA). Notice that the original MH algorithm with Gaussian proposal distribution corresponds to the MALA algorithm for the SDE
\begin{equation}
d\vartheta_t  =  \varepsilon \,  dW_t \COMMA 
\end{equation}
which corresponds to an isotropic diffusion in $\REAL^{N_\vartheta}$ and is usually referred to as the random walk MH algorithm (RWMH).

Although the proposals based on \cref{eq:MALA} follow the direction of the maximum change, guiding $\vartheta$ to regions of high probability, the isotropic diffusion may be inappropriate in the presence of highly correlated random variables. The following SDE offers a better proposal scheme,
\begin{equation}\label{eq:precMALA}
d\vartheta_t  =  \frac{1}{2}\Sigma \nabla \log \pi(\vartheta)\,dt + \sqrt{\Sigma} dW_t  \COMMA 
\end{equation}
where the correlation of the variables is encoded in the constant and positive definite matrix $\Sigma$. The algorithm is known as the pre-conditioned MALA \cite{Roberts2002}.
A variation of the pre-conditioned MALA \cite{Girolami2011} is based on the following SDE with position dependent covariance matrix
\begin{equation}\label{eq:manifold:diffusion}
d \vartheta_t = \frac{1}{2} G^{-1}(\vartheta_t) \nabla \log \pi(\vartheta_t)dt +  \Omega(\vartheta_t) dt + G^{-\frac{1}{2}}(\vartheta_t) dW_t 	\COMMA
\end{equation}
where 
\begin{equation}\label{eq:Omega:a}
\Omega_i(\vartheta_t) = |G(\vartheta_t)|^{-\frac{1}{2}} \sum_j \frac{\partial}{\partial \vartheta_j} \Big [ G_{i,j}^{-1}(\vartheta_t)  |G(\vartheta_t)|^{\frac{1}{2}} \Big ] \COMMA
\end{equation}
and $G$ is a positive definite matrix. The \cref{eq:manifold:diffusion} describes  diffusion in a manifold, defined in local coordinates by $G$ \cite{Girolami2011} and the resulting MCMC algorithm is called manifold MALA (mMALA).  In \cite{Xifara2014} it is shown that the correct form of $\Omega$ should be
\begin{equation}\label{eq:Omega:b}  
\Omega_i(\vartheta_t)  = \frac{1}{2} \sum_{j} \PARTIAL{\vartheta_j} G^{-1}_{i,j}(\vartheta_t) =  
{ -\frac{1}{2} \sum_{j}  \Big [  G^{-1}(\vartheta_t) \frac{ \partial G(\vartheta_t )  }{\partial \vartheta_j}  G^{-1}(\vartheta_t)  \Big ]_{i,j} }
  \COMMA
\end{equation}
where \cref{eq:Omega:a} and \cref{eq:Omega:b} describe equivalent diffusions under the condition $\partial_{\vartheta_j} G_{k,m}(\vartheta) = \partial_{\vartheta_k} G_{j,m}(\vartheta)$. The authors refer to the resulting sampling scheme as position-dependent MALA (pMALA). A simplified version,  under the assumption that the manifold has constant curvature, can be obtained by setting $\Omega=0$ leading to proposals,
\begin{equation} \label{eq:smtmcmc:a}
q(\cdot |\vartheta) = \NORMAL\big( \cdot \GIVEN  \vartheta + \frac{\varepsilon}{2}  G^{-1}(\vartheta) \, \nabla  \log \pi(\vartheta) \, , \, \varepsilon  G^{-1}(\vartheta) \, \big) \PERIOD
\end{equation}
The resulting scheme is known as simplified manifold MALA (smMALA) \cite{Girolami2011}.

However, there are two remaining questions:
\begin{enumerate}
\item{} what is the optimal choice for $G$? and 
\item{} how the parameter $\varepsilon$ should be chosen?
\end{enumerate}

An optimal scaling of $\varepsilon$  with respect to the dimension of $\vartheta$ for various MH schemes (MALA included) has been derived \cite{Roberts2002} while an 
optimal scaling for a class of mMALA algorithms has been provided for a wide class of distributions \cite{Bui-Thanh2012}. We return to this issue in  \cref{sec:sampling:schemes} where we discuss a heuristic procedure for the automatic tuning of $\varepsilon$ in the framework of TMCMC.

In order to answer the first question we consider a specific form for $\pi$ as a posterior of the Bayesian inference problem, i.e., $\pi(\vartheta) = p(\vartheta | \DATA)  \propto p(\DATA | \vartheta)  p(\vartheta)$. Then there are two widely used choices: the negative of the Hessian of $\log p(\DATA , \vartheta)$,
\begin{equation}\label{eq:hessian}
\begin{split}
G(\vartheta) = \mathcal{H}(\vartheta) :=& -  { \SECD{\vartheta} }    \log p( \DATA, \vartheta) \\
=& - { \SECD{\vartheta} }  \log p( \DATA | \vartheta) - { \SECD{\vartheta} }  \log p(\vartheta) \COMMA
\end{split}
\end{equation}
and the Fisher information of $p(\DATA , \vartheta)$,
\begin{equation}\label{eq:fim}
\begin{split}
G(\vartheta) = \mathcal{I}(\vartheta) :=&  -  \MEANN{\DATA | \vartheta}{  \SECD{\vartheta}  \log p(\DATA , \vartheta) }  \\
=&   -  \MEANN{\DATA | \vartheta}{  \SECD{\vartheta}  \log p(\DATA | \vartheta) }    -  \SECD{\vartheta}  \log p(\vartheta)    \COMMA
\end{split}
\end{equation}
which is the Fisher information of the likelihood function minus the Hessian of the prior distribution. 
In  \cref{sec:covariance} we discuss the most suitable choice of $G$.

{
Finally, in \cite{Martin2012} a similar proposal distribution to \cref{eq:smtmcmc:a} has been proposed, where  a new point is proposed according to,
\begin{equation}
\vartheta_{n+1} = \vartheta_{n} +  \mathcal{H}^{-1}(\vartheta_n) \nabla \log \pi(\vartheta_n) + \mathcal{H}^{-\frac{1}{2}}(\vartheta_n) W_n,  \quad W_n\sim\NORMAL(0,I) \PERIOD
\end{equation}
The method is called Stochastic Newton MCMC and compared to  \cref{eq:smtmcmc:a} $G=\mathcal{H}$, $\varepsilon=1$ and the $\frac{1}{2}$ factor has been dropped.
}

%%%%%%%%%%%%%%%%%%%%%%%%%%%%%%%%%%%%

\section{Manifold Transitional Markov Chain Monte Carlo}\label{sec:mtmcmc}

In the TMCMC algorithm, the proposal distribution at the $j$-th stage at point $\vartheta$ is a Gaussian density function centered at $\vartheta$ with a constant covariance matrix, i.e., $\vartheta' \sim \NORMAL(\vartheta,\varepsilon^2\Sigma_s^{(j)})$,
where $\Sigma_s^{(j)}$ is the weighted sample covariance matrix at stage $j$ given by \cref{eq:weighted:cov}. 

Here, we  improve the quality of TMCMC samples by using a proposal scheme based on the manifold MALA algorithms discussed in  \cref{sec:mMALA}.

In the $j$-th stage of TMCMC samples from the distribution
\begin{equation}
p_j(\vartheta) = \frac{p(\DATA | \vartheta)^{\gamma_j} p(\vartheta)}{p_j(\DATA)} \COMMA
\end{equation}
must be collected based on some proposal.
In order to use the proposal scheme \cref{eq:manifold:diffusion} the gradient of $\log p_j$ should be computed,
\begin{equation}\label{eq:gradient:fj}
\nabla \log p_j(\vartheta) =  {\gamma_j} \nabla \log p(\DATA | \vartheta) +  \nabla \log p(\vartheta) \PERIOD
\end{equation}
{For the rest of the presentation we restrict the discussion in uniform prior distributions, thus the last term in \cref{eq:gradient:fj}, as well as the last term in $G(\vartheta)$ in \cref{eq:hessian} and \cref{eq:fim}, vanishes.}
In this case the diffusion \cref{eq:manifold:diffusion} is written as,

\begin{equation}\label{eq:manifold:diffusion:gamma}
d \vartheta_t = \frac{\gamma}{2} \Sigma_\gamma \nabla \log p(\DATA |\vartheta_t) dt +  \Omega_\gamma (\vartheta_t) dt + \sqrt{\Sigma_\gamma(\vartheta_t)} dW_t \COMMA
\end{equation}
where $\Sigma_\gamma=G_\gamma^{-1}$  with  $G_\gamma$  the metric corresponding to $p^\gamma(\DATA | \vartheta)$ and $\Omega_\gamma$ is given by \cref{eq:Omega:b} with $G$ substituted by $G_\gamma$.
{In general, the matrix $G_\gamma$ may not be invertible or positive definite. We define as $\Sigma_\gamma$ the \textit{\cov} matrix and we discuss this issue in more details in Section \ref{sec:covariance}.}
Note, that under the modeling assumption \cref{eq:model:like} and with  $G$ given by \cref{eq:hessian} and \cref{eq:fim} it can be shown that,
\begin{equation}
\Sigma_\gamma(\vartheta) = \gamma^{-1} \Sigma(\vartheta) \quad  \textrm{  and  } \quad  \Omega_\gamma(\vartheta) = \gamma^{-1} \Omega(\vartheta) \PERIOD
\end{equation}
Moreover, the gradient of the log-likelihood function is given by,
\begin{equation}
\PARTIAL{\vartheta_k}  \log p(\DATA | \vartheta) =
\begin{cases}
\sigma_n^{-2} \sum_{i=1}^{N_d} \big( d_i - f_i \big ) \PARTIAL{\varphi_k} f_i, & k=1,\ldots,N_{\vartheta}-1   \COMMA \\
-N_{d}\sigma_n^{-1} + \sigma_n^{-3}   \sum_{i=1}^{N_d} \big( d_i - f_i \big )^2 , & k=N_{\vartheta} \COMMA
\end{cases}
\label{eq:loglik:d1}
\end{equation}
where $f_i = f(x_i;\varphi)$ and the Hessian  is given by,
\begin{equation}
\PARTIALTWO{\vartheta_k}{\vartheta_\ell}  \log p(\DATA | \vartheta) =
\begin{cases}
\sigma_n^{-2} \sum_{i=1}^{N_d} \big( d_i - f_i \big ) \big (  \PARTIAL{\varphi_k \varphi_\ell}  f_i -     \PARTIAL{\varphi_\ell} f_i   \PARTIAL{\varphi_k} f_i \big), \\
\hspace{4cm} k,\ell=1,\ldots,N_{\vartheta}-1   \COMMA \\
-2 \sigma_n^{-3} \sum_{i=1}^{N_d} \big( d_i - f_i \big ) \PARTIAL{\varphi_k} f_i,  \\
\hspace{4cm} k=1,\ldots,N_{\vartheta}-1, \; \ell = N_\vartheta   \COMMA \\
-N_{d}  -  3 \sigma_n^{-4}   \sum_{i=1}^{N_d} \big( d_i - f_i \big )^2 , \\
\hspace{4cm} k,\ell=N_{\vartheta} \PERIOD 
\end{cases}
\label{eq:loglik:d2}
\end{equation}
The computation of the gradient and the Hessian of $p(\DATA | \vartheta)$ involves computation of the derivatives of the observable function $f$. We discuss the details of this computation in \cref{sec:ODEs}.

\begin{figure}
\centering
	\includegraphics[width=0.6\textwidth]{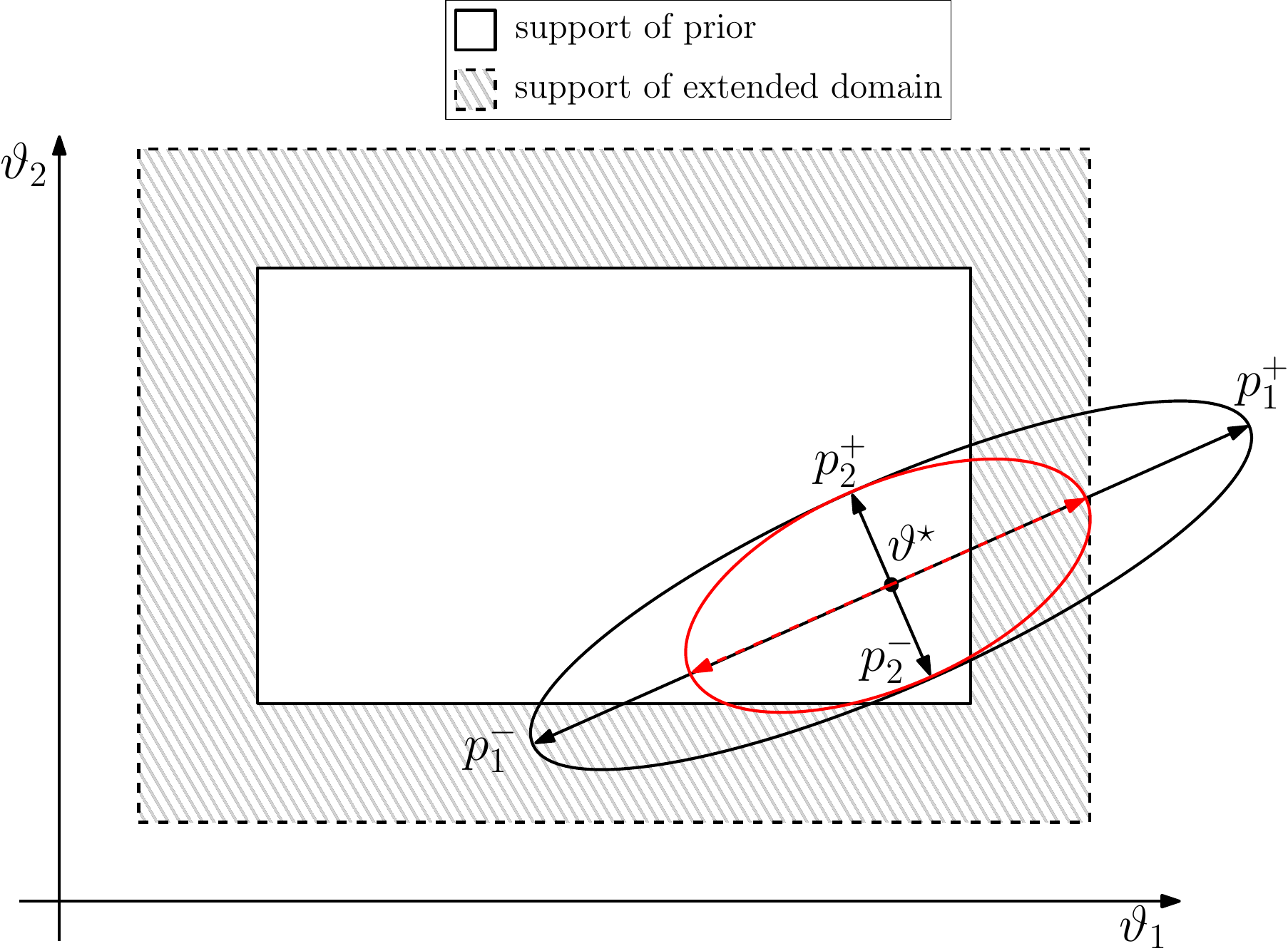}
\caption{
{ {Illustration of the eigenvalue adaptation using the extended boundary in order to treat large eigenverctors of the covariance matrix.}}
}
\label{fig:fix:eig}
\end{figure}

%%%%%%%%%%%%%%%%%%%%%%%%%%%%%%%%%%%%%%%%%%%%%%%%%%%%%%%%%%%%%%%%%%%%%%%%%%%%%%%%%%%%%%%%%%%%%%%%%%%%

\subsection{Choices and corrections for the pseudo covariance  matrix}\label{sec:covariance}

We consider two possible choices of the metric $G$, the Hessian and the Fisher information, defined in \cref{eq:hessian} and \cref{eq:fim}, respectively.
These choices are associated with three problems:
\begin{enumerate} [label=(\alph*)]
\item $G$ is not invertible,
\item $G$ is invertible with some negative eigenvalues, thus $G^{-1}$  {is not positive definite},
\item {$G$ is invertible and $G^{-1}$ is positive definite but some eigenvalues of $G$ are very small, respectively some eigenvalues of $G^{-1}$ are very large. This results in a Gaussian proposal distribution that lies largely outside the bounds specified in the prior distribution.}
\end{enumerate}

The first problem is addressed by setting $G = \Sigma_s^{(j)}$, the sample covariance matrix at the $j$-th stage of TMCMC (see eq.~\ref{eq:weighted:cov}). The second problem emerges only when $G=\HESSIAN$ since the Fisher information is always positive semi-definite.  The presence of {close to} zero eigenvalues is treated in the third case. One possible solution  to this is to set $G = \Sigma_s^{(j)}$ as in the non-invertible case.
 {This fix  disregards all the information that is contained in the directions with negative eigenvalues.}
% As one might expect, this solution is far from optimal since all the information coming from directions with {negative} eigenvalues is discarded.  In fact, it is usually the case that only a few eigenvectors appear with negative eigenvalues while  the rest with positive. 
 Past proposals for fixing the {non-positive} definiteness  of the covariance matrix include the SoftAbs method  \cite{Betancourt2013} and a modified Cholesky decomposition \cite{Kleppe2015}. 
Here,  {we follow the approach used in \cite{Martin2012}} and substitute the negative eigenvalues of $G^{-1}$ with a predefined positive number chosen to be equal to the smallest eigenvalue of  {$\Sigma_s^{(j)}$}. 

{
The third problem appears in the case of unidentifiable parameters or parameter combinations. The likelihood of the data stays constant for these parameters and thus all derivatives are zero. In many practical situations, one may encounter the situation where \textit{computational unidentifiability} appears \cite{Raue2012}.
In this case the problem is identifiable in a small region of the probability space but close to unidentifiable in a large region of the parameter space. Unidentifiable directions correspond to zero, or close to zero, eigenvalues in the Hessian or the Fisher information, leading to large eigenvectors in the proposal Gaussian distribution.}
{It is important to note here that this entire discussion is particular to the case of uniform prior; for an in-depth consideration of the role of non-informed directions in the case of a Gaussian prior the reader is referred to the works \cite{Beskos2017,Cui2016,Cui2014}.}

{We  treat the problems arising from the large eigenvalues as follows: Let  $\Sigma=Q \, \Lambda \, Q^\top \in \REAL^{N_\vartheta,N_\vartheta}$ be the eigendecomposition of the covariance matrix of a normal distribution centered  at the $\vartheta^\star$, with $\Lambda=\DIAG(\lambda_1,\ldots,\lambda_{N_{\vartheta}})$, $\lambda_i \in \mathbb{R}$ the eigenvalues of $\Sigma$ and $Q$ a square matrix whose columns $q_i \in \mathbb{R}^{N_\vartheta}$ are the eigenvectors corresponding to $\lambda_i$.  Then, for $\eta\in (0,1)$ the $1-\eta$ of the probability mass lies inside the ellipsoid described by the equation
\begin{equation}\label{eq:ellipsoid}
( \vartheta-\vartheta^\star)^\top \,  Q \,   \Lambda^{-1} \, Q^\top ( \vartheta-\vartheta^\star)  = \chi_{N_\vartheta}^{2}(\eta)\, ,  \quad  {\vartheta \in \mathbb{R}^{N_\vartheta}} \COMMA
\end{equation}
where $\chi_{N_\vartheta}^{2}(\eta)$ is the upper 100$\eta$-th percentile of the  $\chi^2$ distribution with $N_{\vartheta}$ degrees of freedom \cite{Slotani1964}
The semi-axes of the ellipsoid \cref{eq:ellipsoid} are given by $\sqrt{\lambda_i \chi_k^{2}(\eta) } q_i$.
The idea is that the eigenvalues that lead to large semi-axes will be adapted such that the ellipsoid will lie inside the prior domain.  We have observed that this approach leads to small eigenvalues near the boundaries and thus the proposal distribution may be concentrated near the boundaries. 
We resolve this issue by adapting the semi-axes to an extended boundary that is defined as a percentage {$\rho\in [0,1]$} of the length of the boundaries of the prior probability.

We  adapt the large eigenvalues of $\Sigma$ by finding  constants $c_i \in \mathbb{R}$ such that the points 
\begin{equation}
\hat p_i^{\pm} = \vartheta^\star \pm \sqrt{\widehat\lambda_i \chi_k^{2}(\eta) }q_i, \qquad \widehat\lambda_i = c_i \lambda_i, \qquad i=1,\ldots,N_{\vartheta} \COMMA
\end{equation}
lie inside the extended bounds of the prior distribution. Let $\alpha,\beta\in\mathbb{R}^{N_{\vartheta}}$ be the vectors that define the uniform prior distribution, i.e., $\alpha_i \leq \vartheta_i \leq \beta_i$, and $J_{i,\alpha}^{\pm}$ and $J_{i,\beta}^{\pm}$ the sets of indices that violate the inequalities 
$$
(1-\rho)\alpha_j \leq p_{i,j}^{\pm} \quad \textrm{ and } \quad  p_{i,j}^{\pm} \leq (1+\rho)\beta_j  \COMMA
$$
respectively, where  $p_i^{\pm} = \vartheta^{\star}\pm\sqrt{\lambda_i \chi_k^{2}(\eta) } q_i, \,\, i=1,\ldots,N_{\vartheta}$. Next, we define the constants
\begin{equation}
c_{i,\alpha}^{\pm} = 
\begin{cases}
    \min \left \{  \;  \frac{1}{\lambda_i \chi_{N_\vartheta}^{2}(\eta)}  \left | \frac{ (1-\rho)\alpha_j - \vartheta_j^{\star}}{q_{i,j}}  \right |^2  \; : \; j\in  J_{i,\alpha}^{\pm}  \;   \right \}  & \quad \text{if   } J_{i,\alpha}^{\pm} \neq \emptyset \\
  \qquad \qquad\qquad\quad 1       & \quad \text{if   } J_{i,\alpha}^{\pm} = \emptyset 
\end{cases}
\end{equation}
and
\begin{equation}
c_{i,\beta}^{\pm} = 
\begin{cases}
    \min \left \{  \;  \frac{1}{\lambda_i \chi_{N_\vartheta}^{2}(\eta)} \left | \frac{ (1+\rho)\beta_j - \vartheta_j^{\star}}{q_{i,j}}  \right |^2  \; : \; j\in  J_{i,\beta}^{\pm}  \;   \right \}  & \quad \text{if   } J_{i,\beta}^{\pm} \neq \emptyset \\
  \qquad \qquad\qquad\quad 1       & \quad \text{if   } J_{i,\beta}^{\pm} = \emptyset 
\end{cases}
\end{equation}
for $i=1,\ldots,N_\vartheta$ . 
Finally, the correction constant for the $i$-th eigenvalue is given by
\begin{equation}
c_i = \min \left \{ \;   c_{i,\alpha}^{+}, \, c_{i,\alpha}^{-},  \, c_{i,\beta}^{+},  \, c_{i,\beta}^{-}        \; \right \} \PERIOD
\end{equation}

We will denote the corrected covariance by $\widehat\Sigma = Q \widehat \Lambda Q^\top$ and $\widehat \Lambda = C \Lambda$ with $C=\textrm{diag}(c_1,\ldots,c_{N_\vartheta})$. In all the numerical tests of \cref{sec:numerics} we have set $\eta=0.3$. For an illustration of this approach see \cref{fig:fix:eig}, where the eigenvalue in the direction of $p_1^+$ has been adapted such that  $p_1^+$ will lie inside the extended boundary domain. Notice that the eigenvalues in the direction of $p_2^{\pm}$ has not been changed.}

The advantages of the extended boundaries approach, as well as the selection of the parameter $\rho$, are presented in a truncated multivariate Gaussian in  \cref{sec:gaussian:truncated}.

\begin{table}[htpb]
\caption{Mean and covariance of the proposal distribution, $q(\vartheta| \vartheta^\star)  = \NORMAL \big( \vartheta \GIVEN D(\vartheta^\star;\varepsilon,\gamma),C(\vartheta^\star;\varepsilon,\gamma) \big)$ for  the TMCMC algorithms described in  \cref{sec:sampling:schemes}.}
\label{table:proposals}
\centering
\begin{tabular}{ | M{2.4cm} | M{6cm} | M{2cm}  | N} \hline
\textbf{name} 	& \textbf{mean}, $D(\vartheta^\star;\varepsilon,\gamma)$ 
			& \textbf{covariance}, $C(\vartheta^\star;\varepsilon,\gamma)$ 
			& \\[5pt] \hline
 TMCMC 	& $\displaystyle \vartheta^\star$ 
 			& $\displaystyle \varepsilon \Sigma_s^{(j)}$   
			&   \\[8pt] \hline
smTMCMC &  $\displaystyle \vartheta^\star +  \frac{\varepsilon \gamma}{2}  \widehat\Sigma_\gamma(\vartheta^\star) \nabla \log p(\DATA |\vartheta^\star) $    
		    &  $\displaystyle \varepsilon  \widehat\Sigma_\gamma(\vartheta^\star)$ 
		    &  \\[12pt] \hline
pTMCMC 	&  $\displaystyle  \vartheta^\star + \frac{\varepsilon \gamma}{2}  \widehat\Sigma_\gamma(\vartheta^\star) \nabla \log p(\DATA |\vartheta^\star) +  \varepsilon \widehat\Omega_\gamma (\vartheta^\star) $   
			&  $\displaystyle \varepsilon {\widehat\Sigma}_\gamma(\vartheta^\star)$ 
			&  \\[12pt] \hline
\end{tabular}
\end{table}

\subsection{Sampling schemes} \label{sec:sampling:schemes}
In this section we summarize the proposal schemes that we implement within the TMCMC algorithm. For the proposal distribution we use the unified notation,
\begin{equation}
q(\vartheta | \vartheta^\star) = \NORMAL \big( \vartheta \GIVEN D(\vartheta^\star;\varepsilon,\gamma),C(\vartheta^\star;\varepsilon,\gamma) \big) \COMMA
\end{equation}
with $D$ and $C$ the mean and the covariance matrix of the Gaussian proposal distribution, respectively.  In the original TMCMC we have  $D(\vartheta^\star;\varepsilon,\gamma)=\vartheta^\star$ and  $C(\vartheta^\star;\varepsilon,\gamma)=\varepsilon\Sigma_s^{(j)}$, 
{see  eq.~\cref{eq:weighted:cov}}. Since $\Sigma_s^{(j)}$ is a constant matrix the proposal distribution is symmetric and the acceptance ratio is independent of $q$.

{ The simplified manifold TMCMC (smTMCMC)  \cref{eq:smtmcmc:a}  is obtained by assuming that the metric $G_\gamma$ is locally constant.} This assumption leads to $\Sigma_\gamma(\vartheta)=\Sigma_\gamma(\vartheta^\star)$ and $\Omega_\gamma(\vartheta)=0$. The \cov matrix $\Sigma_\gamma(\vartheta)$ is substituted by $\widehat\Sigma_\gamma(\vartheta)$,  the corrected covariance according to the scheme presented in  \cref{sec:covariance}. Finally, the mean and the variance of the proposal distribution are given by   $D(\vartheta;\varepsilon,\gamma)= \vartheta +  \frac{\varepsilon \gamma}{2}  \widehat\Sigma_\gamma(\vartheta) \nabla \log p(\DATA |\vartheta) $    and    $C(\vartheta;\varepsilon,\gamma)=\varepsilon\Sigma_\gamma(\vartheta)$, respectively. Notice that the proposal distribution is not symmetric. This asymmetry  must be accounted in the calculation of the acceptance ratio (see  \cref{table:acceptance}).

{Without the assumption of a locally constant metric, the term $\Omega_\gamma$ is not necessarily zero.} After correcting the {\cov} matrix $\Sigma_\gamma$ the $\Omega$ term is written as,
\begin{equation}
\widehat\Omega_{\gamma,i}(\vartheta)  =
{
 -\frac{1}{2} \sum_{j}  \Big [   \widehat\Sigma_\gamma(\vartheta) \frac{\partial  \widehat G_{\gamma}(\vartheta)}{\partial \vartheta_j}  \  \widehat\Sigma_\gamma(\vartheta)   \Big ]_{i,j}  \COMMA
  }
\end{equation}
with $\widehat G_\gamma = \widehat\Sigma^{-1}_\gamma$. In order to express the unknown derivative of $\widehat G$ in terms of the known derivative of $G$ we follow the same procedure as in \cite{Betancourt2013}, see also  \cref{appendix:derivative:G}, 
\begin{equation}\label{eq:derivative:G}
\frac{\partial \widehat G_\gamma}{\partial \vartheta_j} = Q \Big (     J \circ      \big( Q^\top \, \frac{\partial  G_\gamma}{\partial \vartheta_j}  \, Q \big )   \Big  ) Q^\top \COMMA
\end{equation}
where $\circ$ denotes the Hadamard product and 
\begin{equation}
J_{i,j} = 
\begin{cases}
\frac{ \widehat\lambda_i - \widehat\lambda_j}{\lambda_i - \lambda_j} =   \frac{ c_i \lambda_i - c_j \lambda_j}{\lambda_i - \lambda_j} , & i\neq j \\
\frac{\partial \widehat \lambda_i}{\partial \lambda_i} = c_i, & i=j \PERIOD
\end{cases}
\end{equation}
We call the resulting algorithm position dependent TMCMC (pTMCMC).

\vspace{5pt}\noindent\textbf{Tuning the scale parameter.}
The  parameter $\varepsilon$  is tuned dependent to the sampling algorithms. A usual practice is to vary $\varepsilon$ until the acceptance rate of the algorithm reaches a desired acceptance rate. Following the results in \cite{Roberts2001} we set the target acceptance rate for TMCMC, which is based in random walk MH, equal to $0.234$ and for the Langevin diffusion based TMCMC equal to $0.574$.
It can be shown that $\varepsilon$ scales like $N_\vartheta^{-1}$ for RWMH and like $N_\vartheta^{-1/3}$ for MALA  \cite{Roberts2001} as $N_\vartheta \rightarrow\infty$.
One way to find the optimal scaling is to run some preparatory, relatively small, {TMCMC} simulations with various $\varepsilon$ and choose the one that gives the optimal acceptance rate.

An adaptive method to find the optimal scaling is proposed  in \cite{Betz2016} in the framework of TMCMC. In this method the sample sets at the $j$-th stage of TMCMC are divided into subsets. These subsets are run sequentially and information of the mean acceptance rate is being passed to the next subset. The scaling parameter is being tuned depending on the distance between the mean acceptance rate and the target acceptance rate.

\begin{table}[htpb]
\caption{ Acceptance ratio $\alpha(\vartheta | \vartheta^\star) = \min \left( 1,A(\vartheta | \vartheta^\star) \right)$ for the proposal distribution  of  the TMCMC algorithms described in  \cref{sec:sampling:schemes}.}
\label{table:acceptance}

\centering
\begin{tabular}{ | M{2.4cm}  | M{5.0cm} | N} \hline
\textbf{name} 	& \textbf{acceptance ratio}, $A(\vartheta | \vartheta^\star)$ 
			& \\[5pt] \hline
 TMCMC 	&   $\displaystyle \frac{ p(\DATA | \vartheta)  }{ p(\DATA | \vartheta^\star) } $  
			&   \\[24pt] \hline
smTMCMC 	&   $\displaystyle \frac{ p(\DATA | \vartheta)  \, \NORMAL \big(\vartheta^\star \GIVEN D(\vartheta),C(\vartheta) \big)      }{ p(\DATA | \vartheta^\star) \; { \NORMAL \big(\vartheta \GIVEN D(\vartheta^\star),C(\vartheta) \big)  }  } $
		    	&  \\[25pt] \hline
pTMCMC 		& $\displaystyle \frac{ p(\DATA | \vartheta)    \;  \NORMAL \big(\vartheta^\star \GIVEN D(\vartheta),C(\vartheta) \big)     }{ p(\DATA | \vartheta^\star) \;  \NORMAL \big(\vartheta \GIVEN D(\vartheta^\star),C(\vartheta^\star) \big)    } $
			&  \\[25pt] \hline
\end{tabular}
\end{table}

\vspace{5pt}\noindent\textbf{Choosing the metric.}
In the aforementioned sampling schemes either the Hessian or the Fisher information is used as the underlying metric. Note that the Hessian needs the computation of second order derivatives while the Fisher information requires only first order derivatives. Moreover, since the Fisher information is always positive semi-definite, we expect it to perform better than the Hessian. However, in some cases the Fisher information is not explicitly known, e.g., in the case of Gaussian mixtures, and thus the use of the Hessian is inevitable. {Finally, in the pTMCMC  scheme, where the computation of  $\widehat\Omega$  is involved, the order of the needed derivatives is increased by one. Thus, for $G=\HESSIAN$ and $G=\FIM$,  third and second order derivatives, respectively, should be computed. In order to avoid the computation of third order derivatives we choose not to implement the Hessian in this sampling scheme.}

The performance of the proposed algorithms is discussed in detail in  \cref{sec:numerics}. In  \cref{table:proposals} we present a summary of the proposal distribution and in \cref{table:acceptance} the resulting acceptance ratio for the various sampling schemes.

\section{Applications}\label{sec:numerics}

We demonstrate the capabilities of the present algorithm on a number of benchmark problems  and on a challenging Pharmacodynamics model that is calibrated using clinical data.

We first  test our algorithm in a truncated Gaussian distribution. This test demonstrates the need of the extended boundary approach discussed in  \cref{sec:covariance}. In the second example the scaling of the error of smTMCMC is compared to that of TMCMC in a multivariate Gaussian distribution.
Finally, we sample the posterior distribution of a Bayesian inference problem on the parameters of Pharmacodynamics model. The challenging part of sampling this distribution is that it has at least one large manifold of non identifiable parameters.  In this example the TMCMC algorithm completely fails to explore the parameter space while the smTMCMC algorithm provides good quality samples.
We note that in our experiments the differences between smTMCMC and pTMCMC are indistinguishable and thus only results from smTMCMC are reported. The provided code includes also the pTMCMC algorithm which employs as diffusion metric the Fisher Information.

\subsection{Truncated Gaussian distribution}\label{sec:gaussian:truncated}

\begin{figure}
\centering
        \includegraphics[width=0.8\textwidth]{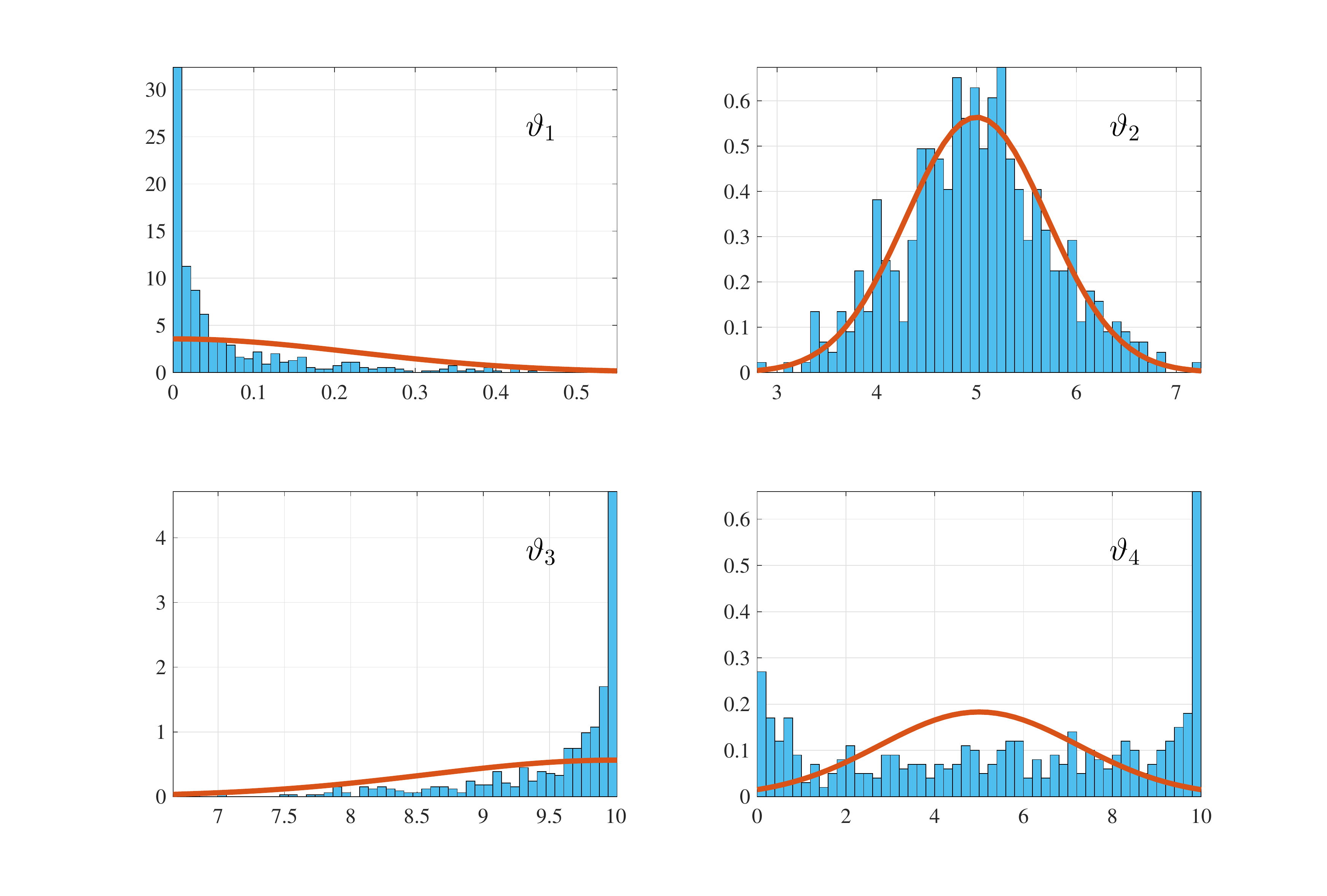}
\caption{Marginal histograms of the {sampled compared with the exact (red line) truncated Gaussian distribution \cref{eq:gaussian:uni}.} The sampling algorithm is the smMALA algorithm with no extended boundaries. Notice that the sampling algorithm incorrectly concentrates samples near the boundaries of the prior.}
\label{fig:gaussian:uni}
\end{figure}

Here, we illustrate the ability of the proposed algorithm to sample correctly distributions with mass concentrated at the boundaries of the prior distribution. The strategy of adapting the eigenvalues of the covariance matrix of the proposal distribution by extending the boundaries of the prior distribution \cref{sec:covariance} is compared with the $\rho=0$ adaptation where the eigenvalues are adapted to the boundaries of the prior. For $\rho=0$ the covariance matrix after the correction, $\widehat \Sigma$, at samples near the boundaries will have some very small eigenvalues and the MCMC chain starting at these points will be locally trapped.

We consider  the truncated Gaussian probability distribution
\begin{equation} \label{eq:gaussian:uni}
p(\vartheta) =    \prod_{i=1}^4    \NORMAL(\vartheta_i \GIVEN \mu_i,\sigma_i^2) \,  \UNIFORM(\vartheta_i \GIVEN 0,10) \COMMA
\end{equation}
with $\mu=(0,5,10,9)$ and $\sigma^2=(0.05,0.5,2,5)$. The marginal distribution of $\vartheta_i$ is depicted with a solid line in  \cref{fig:gaussian:uni}. The histograms in  \cref{fig:gaussian:uni}  are obtained with the smMALA algorithm and $G$ is the Fisher information matrix of \cref{eq:gaussian:uni} using $500$ samples and $\varepsilon=1$. It is evident from this example that the samples are concentrated at the boundaries of the prior distribution where there is significant mass of the Gaussian distribution. Notice that the variable $\vartheta_2$, which is distributed  away from the boundaries, is sampled correctly.

Next, we sample \cref{eq:gaussian:uni} using smMALA with extended boundaries and we study the effect of the parameter $\rho$ to the accuracy of the sampling. 
{Let $\tilde p$ be the estimated probability distribution using a sampling algorithm and $p$ the target distribution \cref{eq:gaussian:uni}.}
We measure the accuracy using the  relative entropy or Kullback-Leibler (KL) divergence of $p$ from $\tilde p$,
\begin{equation}\label{eq:kl}
D_{KL}( \, \tilde p \, \|  \,  p \, ) = \int_{-\infty}^\infty  \tilde p(\vartheta) \log \frac{\tilde p(\vartheta)}{ p(\vartheta) } \, d\vartheta \COMMA
\end{equation}
with $p$ absolutely continuous with respect to $\tilde p$, i.e., $p(\vartheta)=0$ implies $\tilde p(\vartheta)=0$.
The KL divergence is not a metric, since it is not symmetric and does not obey the triangle inequality, but it is a measure of loss of information if $p$ is used instead of $\tilde p$.
The reason we consider  $D_{KL}( \, \tilde p \, \|  \,  p \, )$, and not $D_{KL}( \, p \, \|  \,  \tilde p \, )$, is that $\tilde p$ (the estimated probability) is always absolutely continuous with respect to $p$ (the target distribution). Since $\vartheta_i$ are independent the KL divergence reduces to the sum of KL divergence of the marginal distributions.

\begin{figure}[tbhp]
\centering 
\subfloat[]{\label{fig:uni:err:a} \includegraphics[width=0.46\textwidth]{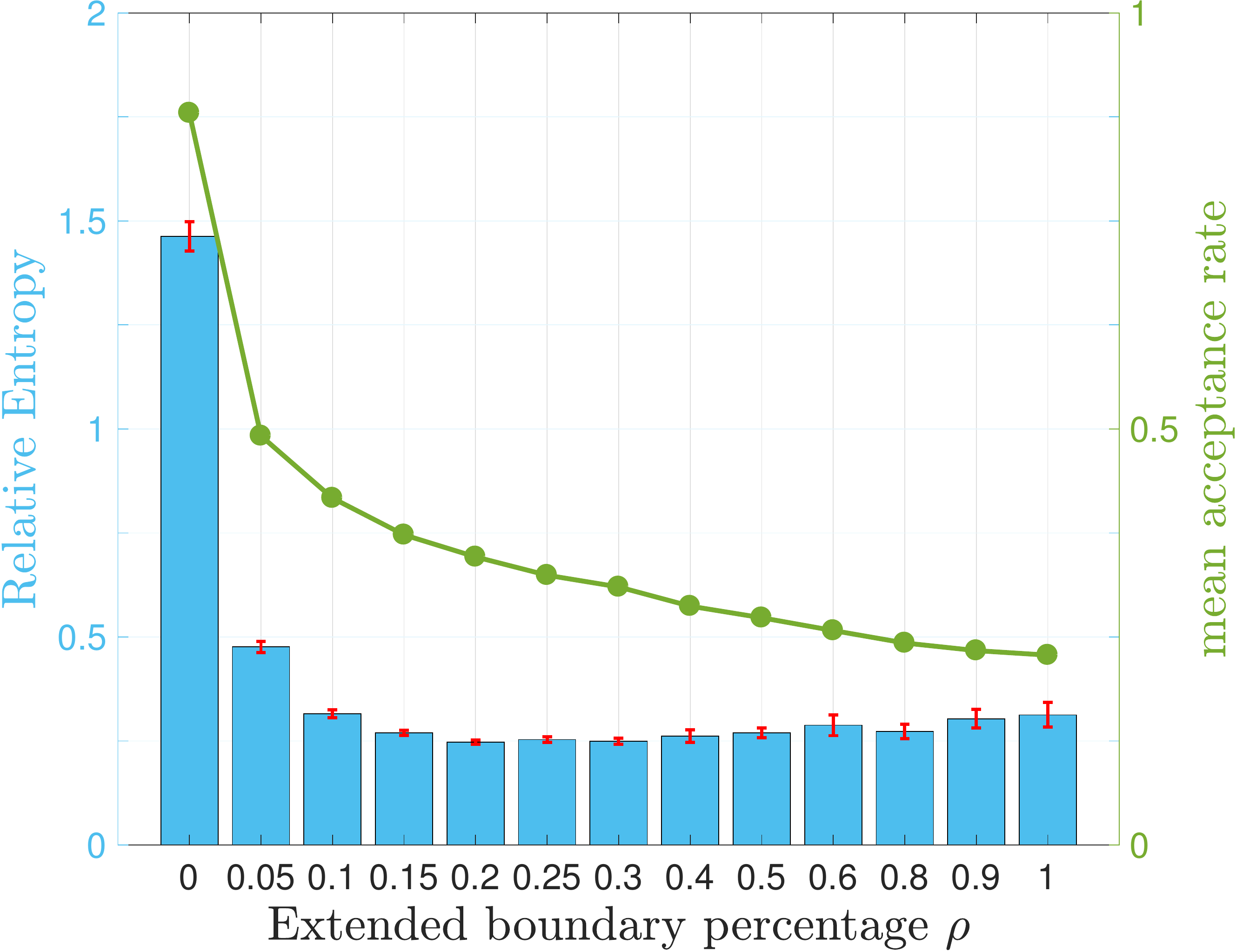} } 
\subfloat[]{\label{fig:uni:err:b} \includegraphics[width=0.44\textwidth]{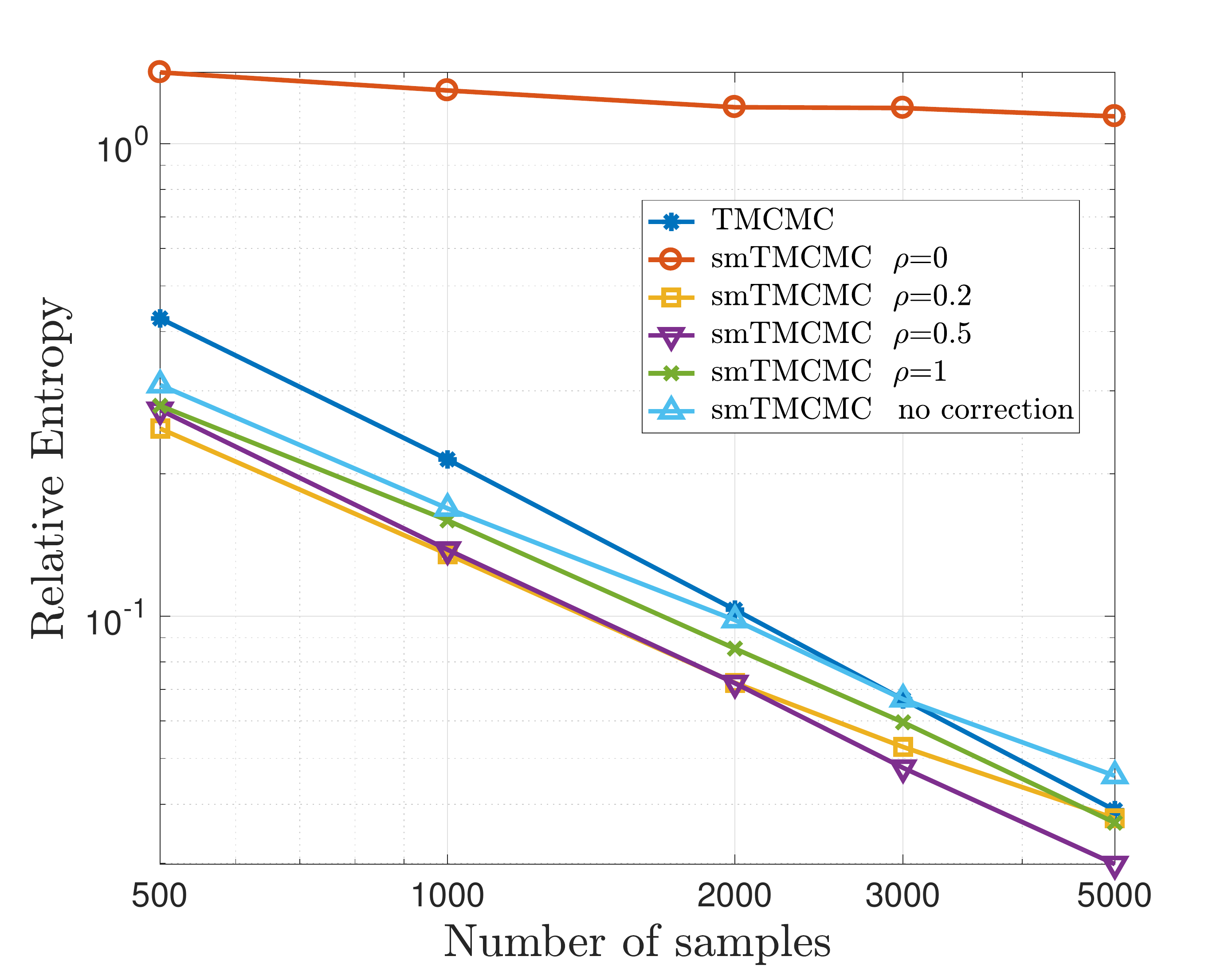} } 
\caption{(a) Estimated relative entropy (KL divergence) between the estimated and the truncated Gaussian distribution of \cref{eq:gaussian:uni} as a function of the parameter $\rho$ for the smTMCMC algorithm. (b) The same quantity as a function of the sample size of the sampling algorithm for TMCMC, smTMCMC with $\rho=0,0.2,0.5,1$ and smTMCMC with no correction.}
\label{fig:uni:err}
\end{figure}

In  \cref{fig:uni:err:a} the averaged KL divergence over $100$ independent {samplings (each with 500 samples)} is plotted as a function of $\rho$ and is depicted  with bars and a scale that corresponds to the left y-axis. 
 For $\rho=0$ the divergence reaches its maximum while for $\rho$ near $0.2$ the divergence is minimized. For $\rho$ larger than $0.3$ the divergence slightly increases but remains low.
As expected, the mean acceptance rate over the stages of smTMCMC, depicted with filled dots in  \cref{fig:uni:err:a} and scale that corresponds to the right y-axis, drops as $\rho$ increases.

In  \cref{fig:uni:err:b} the  averaged KL divergence over $200$ independent samplings is plotted as a function of the number of samples of the sampling algorithm. The KL divergence of TMCMC and smTMCMC with $\rho=0,0.2,0.5,1$ is estimated. Notice that smTMCMC with $\rho=0$ shows little improvement as the number of samples increases. This implies that the bias introduced by the eigenvalue adaptation scheme is not reduced with the size of the sample set. On the other hand, smTMCMC with $\rho=0.2,0.5$ and 1 has lower error than the TMCMC, with both slopes being equal to approximately $-1$.  {Finally, smTMCMC with no correction is shown here. As expected, the no correction scheme works well in this case as there are no non-identifiable directions. Moreover, the performance of the correction scheme is similar to that with no correction.}
Confidence intervals are not presented because they are smaller than the size of the markers in the plot.

\subsection{Multi-Dimensional Gaussian Distribution}\label{section:gaussians}

\begin{figure}[tbhp]
\centering 
\subfloat[Gaussian distribution]{\label{fig:gaussian:err:a} \includegraphics[width=0.45\textwidth]{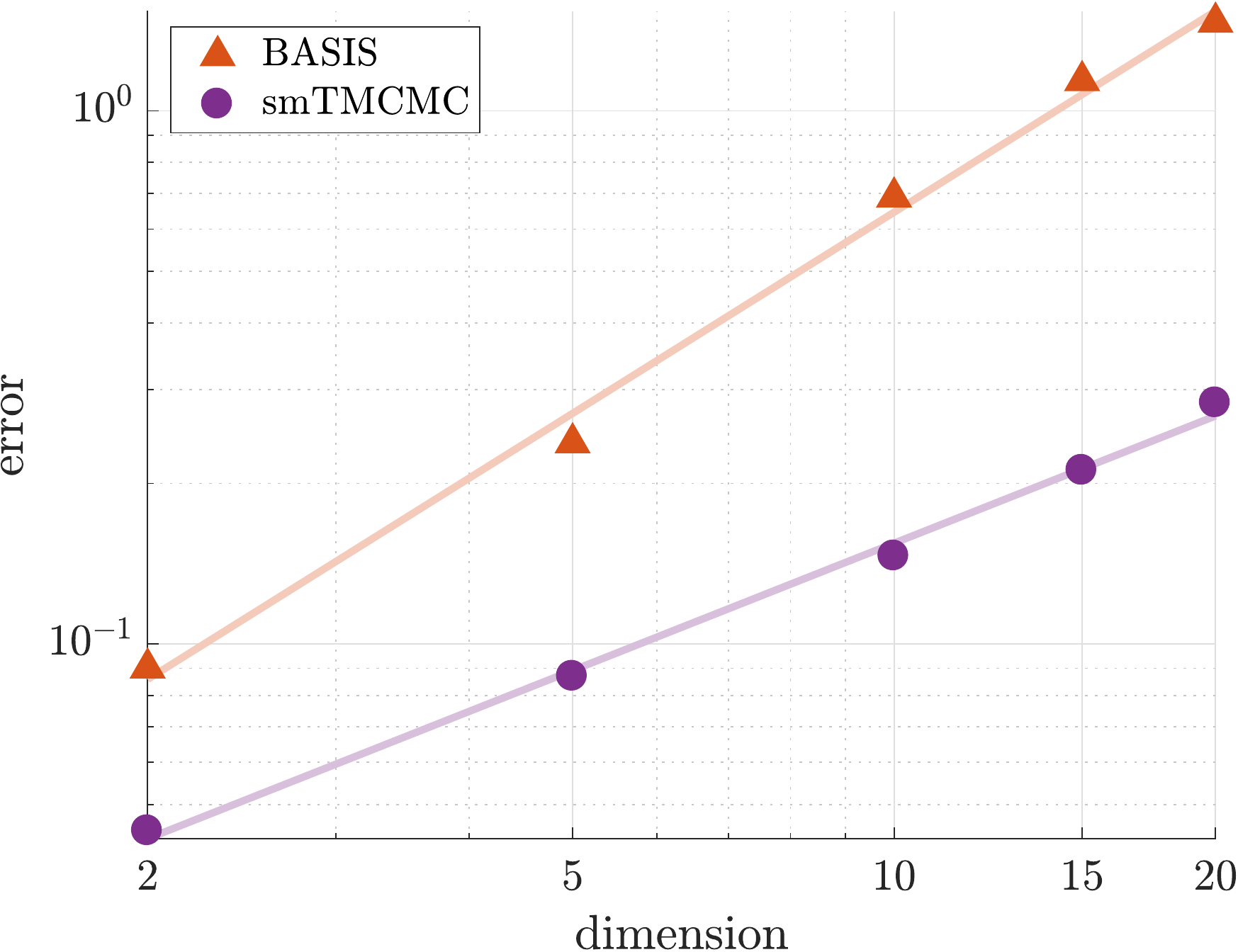} } 
\hspace{20pt}
\subfloat[Bimodal Gaussian mixture]{\label{fig:gaussian:err:b} \includegraphics[width=0.45\textwidth]{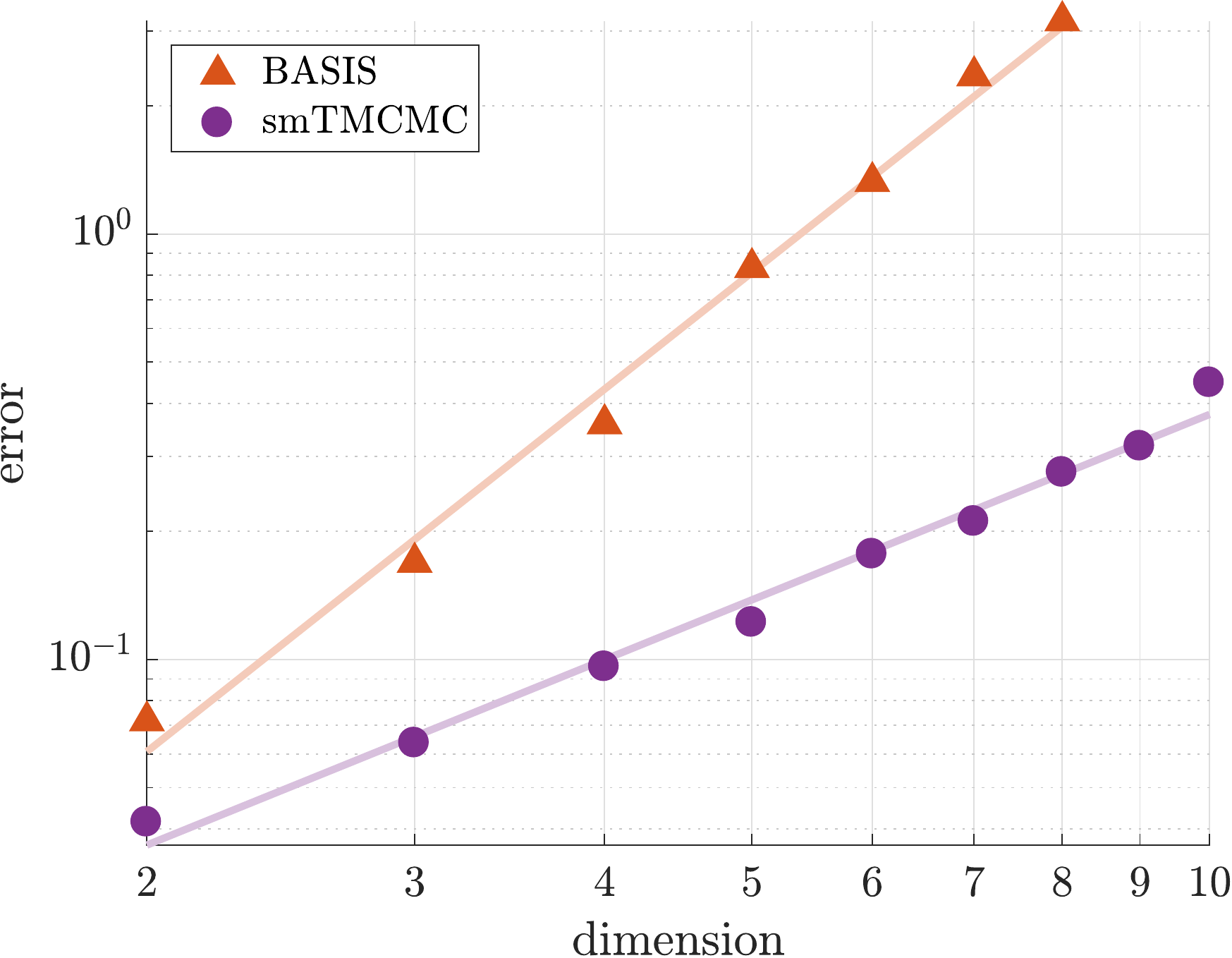} } 
\caption{Averaged error of (a) a Gaussian distribution (b) a {bimodal Gaussian mixture} distribution as a function of the dimension. {The continuous lines are the linear fits in the logarithmic scale of the data.}}
\label{fig:gaussian:err}
\end{figure}

In this section we present the scaling of the sampling error for the TMCMC and smTMCMC algorithms in a { Gaussian distribution and a bimodal Gaussian mixture}.
In the first test case, the mean $\mu$ is zero and the covariance matrix $\Sigma$ of the target distribution is randomly generated with the MATLAB function \textit{gallery('randcorr',d)}, where $d$ is the number of dimensions.
The error of the sampling algorithm is defined as $E = \frac{1}{2} (e_1+e_2)$ where \
\begin{equation}
{
e_1 = \frac{1}{d} \sum_{i=1}^d  | \bar \mu_i - \mu_i | \quad \textrm{and} \quad e_{2} = \frac{1}{d^2}  \sum_{i,j=1}^d  |\bar\Sigma_{i,j} - \Sigma_{i,j} |  \COMMA
}
\label{eq:def:err}
\end{equation}
and $\bar\mu,\, \bar\Sigma$ are the estimated mean and covariance matrix, respectively.
The reported sampling error is averaged over 100 independent simulations {using 1000 samples}.
The {initial distribution} is  a $d$-dimensional uniform distribution with support in $[-10,10]$ for each dimension. 
We set the scaling parameters $\varepsilon = 0.04$  for TMCMC and  $\varepsilon = 1$ for smTMCMC. Notice that  in this case the Hessian matrix is equal to Fisher information and equal to the inverse of the covariance matrix of the target distribution.

In  \cref{fig:gaussian:err:a} the error for $d=2,5,10,15,20$ is presented. Both errors increase with the dimension of the target distribution while the error of smTMCMC is always bellow the error of TMCMC.
{In \cref{fig:gaussian:err:vs:dim} the error for $d=5$ as a function of the number of samples is presented in logarithmic scale. The continuous lines correspond to fitted linear functions and the rate of convergence is approximately $-\frac{1}{2}$.}

Next, we test the performance of the sampling algorithms in a $d$-dimensional bi-modal Gaussian {mixture} distribution with the two modes centered at ${\mu_1=(-5,\ldots,-5)}$ and $\mu_2= -\mu_1$ and two equal covariance matrices $\Sigma_1=\Sigma_2$ randomly generated as in the previous example. 
%Here, $j_d\in\REAL^d$ is a d-dimensional vector of the form $j=(1,\ldots, 1)^\top$. 
The scaling parameters are the same as in the previous example.  Note that the Fisher information in not explicitly known for Gaussian mixtures, hence the Hessian is the only metric we use for smTMCMC. To calculate the errors, we first assign each sample to either one of the modes depending on the shortest Euclidean distance.
{The total error is defined as the  average of the local error on each mode, as defined in \cref{eq:def:err}.}

 \begin{figure}[tbhp]
\centering 
\includegraphics[width=0.5\textwidth]{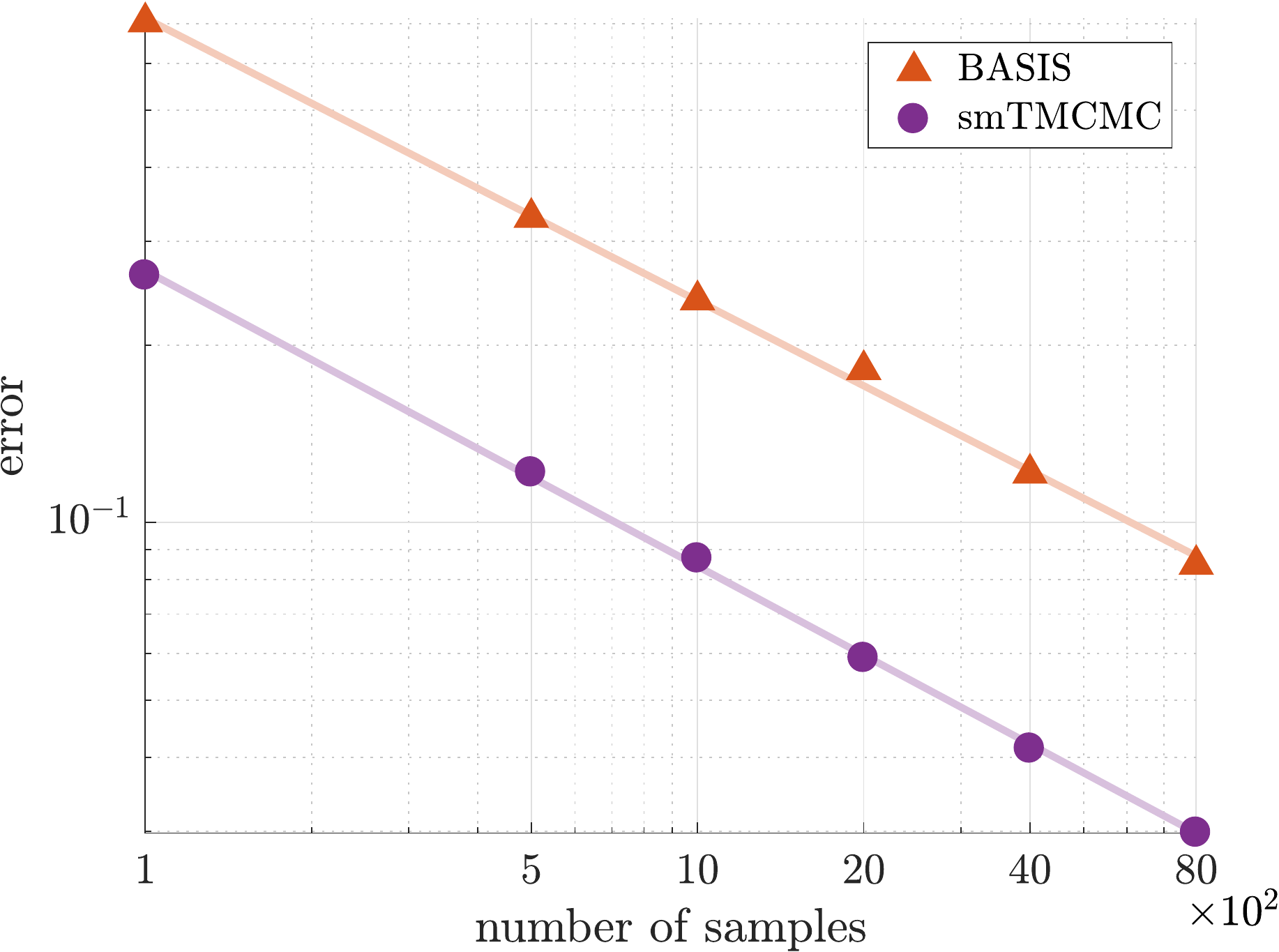} 
\caption{{Averaged error over 100 independent samplings of the 5 dimensional zero mean Gaussian distribution discussed in \cref{section:gaussians}. Both fitted lines to the data (continuous lines) have slope approximately $-\frac{1}{2}$.}}
\label{fig:gaussian:err:vs:dim}
\end{figure}

In  \cref{fig:gaussian:err:b} the error $E$ averaged over 100 independent simulations is presented. The size of the error bars is comparable to that of the markers and thus not included in the graph. The sample size used in this example is chosen to be $N_s=5000$. The reason it is increased compared to the unimodal test case is that TMCMC is not able to detect both modes of the bimodal distribution with less samples. Moreover, with the number of samples fixed, TMCMC is able to detect both modes up to $d=8$ while smTMCMC can go up to 10. Lastly, the difference between the TMCMC and smTMCMC error is increased as a function of dimension.

 The computation of the Hessian or Fisher information comes with an additional cost, but during our experiments no remarkable runtime differences have been observed due to the small computational intensity of the problem. The maximum relative runtime difference has been detected in high dimensions where smTMCMC runs approximately 10\% slower that TMCMC.

A feature of the smTMCMC algorithm is that the number of corrections in the covariance matrix, {i.e., the number of times a matrix $G$ was transformed to a matrix $\widehat{G}$} as discussed in  \cref{sec:covariance}, drop from a very high value (approximately 90\%) for early stages to almost zero during the last stages of the algorithm. This is expected, since during the early TMCMC stages the target distribution is close to uniform. This leads to close to zero values for the Hessian or the Fisher information matrices which in turn leads to  quite wide  proposal distributions that have to be corrected. As TMCMC evolves, the sampled distribution is getting closer to the target distribution and the Hessian or Fisher information has the potential to get improved, at least for locally log-concave functions.

\subsection{A pharmacodynamics model} \label{sec:pharma}

We {apply the proposed algorithms to the  Bayesian inference of a model for the growth for adult, diffuse, low-grade gliomas and their  drug induced inhibition.} We employ the model proposed in \cite{Ribba2012} and infer its  parameters using {clinical} data from MRI measurements of tumors from different patients.  The tumor is composed of proliferative tissue $(P)$ and quiescent tissue $(Q)$. The treatment, due to the drug administration $(C)$ aims to  destroy proliferative cells or transform the quiescent cells to damaged quiescent cells $(Q_P)$. The damaged quiescent cells may either die or repair their DNA and transform into proliferative cells. The model consists of a system of ordinary differential equations,
\begin{equation}
\begin{aligned}
\frac{d C}{dt} &= - \varphi_1 C \\
\frac{d P}{dt} &= \varphi_4 P (1-\frac{P+Q+Q_P}{K}) +  \varphi_5 Q_P - \varphi_3 P - \varphi_1\varphi_2 CP \\
\frac{d Q}{dt} &=  \varphi_3 P - \varphi_1 \varphi_2 C  Q \\
\frac{d Q_P}{dt} &= \varphi_1\varphi_2 C Q - \varphi_5 Q_P - \varphi_6 Q_P \\
C(0) &= 0,  \; P(0)=\varphi_7,  \; Q(0)=\varphi_8, \; Q_P(0) = 0  \PERIOD
\label{eq:pharma}
\end{aligned}
\end{equation}
We set $Y=(C,P,Q,Q_P)^\top$ and the quantity of interest is $ {f(Y,t;\varphi)} = P(t;\varphi) + Q(t;\varphi) + Q_P(t;\varphi)$.
Here the correspondence of the parameter vector $\varphi$ to the parameters used in \cite{Ribba2012} is   
\begin{equation*}
\varphi = (KDE, \gamma, k_{PQ}, \lambda_P, k_{Q_P P} , \delta_{QP},P_0, Q_0) \PERIOD
\end{equation*} 

We process a set of 5 measurements $\DATA_k = \{ (t_i,d_i)  : \, i=1,\ldots,N_d\}$, $k=1,\ldots,5$, that correspond to the measurements of the mean tumor diameter $d_i$ at time instances $t_i$  of 5 patients, and a set of time instances of drug administration $\{ \tau_i : \, i=1,\ldots,N_\tau \}$ is provided \cite{Ribba2012}. At $t=\tau_i$ we restart the simulation and set $C(\tau_i)=1$. {The differential equation system is solved using the Matlab function \textit{ode45}.}
In order to reduce the number of inferred parameters we assume that at time $t=0$ a measurement, $d_1$, is {exactly known} and thus $\varphi_8=d_1-\varphi_7$.

The likelihood function is based on the assumption \cref{eq:likelihood:assumption} of a normally distributed model error and the augmented parameter vector is defined as $\vartheta=(\varphi^\top,\sigma_n)^\top$. 
We choose a uniform prior distribution for the parameters, $p(\vartheta) = \prod_{i=1}^8 \UNIFORM(\vartheta_i | a_i,b_i)$ with $a=( 10^{-2}, 10^{-2}, 10^{-5} , 10^{-5},  10^{-5}, 10^{-5},10^{-5},10^{-5})$ and $b=(20,20,2.5,0.3,0.05,0.6, \allowbreak1,33)$.

Since the prior distribution is uniform, the maximum a-posteriori probability (MAP) estimate coincides with the maximum likelihood (ML) estimate,
\begin{equation}
\vartheta_{ML} =  \argmax_{ {\vartheta} } \, p(\vartheta | \DATA) = \argmax_{ {\vartheta} } \, p(\DATA |  \vartheta) \COMMA
\end{equation}
{where the maximum is taken over all components of $\vartheta \,\,({\vartheta_j>0}, \,\,j=1,\ldots, 8)$}. We optimize the log-likelihood for the different data sets, corresponding to different patients, using the  Covariance Matrix Adaptation Evolution Strategy (CMA-ES) \cite{Hansen2003}. Then, we draw samples from the posterior distribution and compare the maximum log-likelihood in the set of samples with the ML estimate obtained using CMA-ES. {We use this test as an indicator to check whether the sampling algorithm is able to sample the high probability area close to the ML point.}

In \cref{table:map} we report the estimated ML  by the CMA algorithm and compare with the maximum log-likelihood in the sample set of TMCMC and smTMCMC algorithms. The results of pTMCMC are indistinguishable from those of smTMCMC and thus not reported here. The TMCMC algorithm is unable to identify the ML while the smTMCMC gets closer to maximum likelihood.
Increasing the sample size consistently improves the results of smTMCMC while this is not true for TMCMC, see for example the decrease in the log-likelihood for $10^5$ samples for patients 3 and 5 in   \cref{table:map}.

The difficulty that the sampling algorithms faces in trying to identify the high probability areas, suggests that an unidentifiable manifold is present in the posterior probability space. 
Using the profile log-likelihood (PL) function \cite{Raue2012},
\begin{equation} \label{eq:PL}
\mathrm{PL}(\vartheta_i | \DATA) =  \max_{\vartheta_{\backslash i}} \, p(\vartheta | \DATA)  \COMMA
\end{equation}
{where the notation $\vartheta_{\backslash i}$ implies that we fix the $i$-th element and we maximize over the remaining elements of $\vartheta$.}
We verified this assumption for  $i=1$. Areas of constant PL indicate non identifiable parameters and as shown in  \cref{fig:PL:pat1} the PL function exhibits  a large area of unidentifiability for $\vartheta_1 > 2$.

\begin{table}[htpb]
\begin{center}
\begin{tabular}{ | M{0.5cm} || M{1.8cm} || M{1.8cm} | M{2cm} || M{2cm} |  M{2cm} | N} \hline
No & CMA-ES & TMCMC $N_s=1e4$ & TMCMC $N_s=1e5$ & smTMCMC $N_s=1e4$ & smTMCMC $N_s=1e5$ &   \\[5pt] \hline
1& -33.06 &	-38.81&	-36.99&	-34.89& -33.66  & \\[5pt] \hline
2&  -29.91&	-39.94&	-35.67&	-32.11&	-30.96 &	\\[5pt] \hline
3&  -26.24&	-28.11&	-30.39&	-26.79&	-26.40 &	\\[5pt] \hline
4&  -26.69&	-29.85&	-28.18&	-27.39& -26.88  & \\[5pt] \hline
5&  -15.68&	-35.41&	-36.23&	-19.42&	-17.87 &	\\[5pt] \hline
\end{tabular}
\end{center}
\caption{The maximum log-likelihood estimate for the Pharmacodynamics model \cref{eq:pharma} obtained using the CMA-ES algorithm, compared with the maximum log-likelihood found in the sample set using the TMCMC and the  smTMCMC algorithms.}
\label{table:map}
\end{table}

In  \cref{fig:TMCMC:pat1}, \cref{fig:mTMCMC:pat1:1e4} and \cref{fig:mTMCMC:pat1:1e5} we present the posterior samples conditioned on  data from the first patient. In \cref{fig:TMCMC:pat1} the sampling was done using the TMCMC algorithm and $10^4$ samples, in  \cref{fig:mTMCMC:pat1:1e4} and  \cref{fig:mTMCMC:pat1:1e5} using the smTMCMC and $10^4$ and $10^5$ number of samples, respectively. The histograms for the 8 parameters are plotted along the diagonal. Pair samples and a smoothed version of the pair marginal histogram  are plotted in the  upper and lower triangular part of the figure, respectively. We note  that the TMCMC is being trapped in local maxima of the posterior distribution in all directions and it is not able to correctly populate the posterior sample space. This is evident from the spikes on the histograms on  \cref{fig:TMCMC:pat1}. This unnatural local mass concentration does not appear in the smTMCMC sample set shown in   \cref{fig:mTMCMC:pat1:1e4}.

We have also observed that the TMCMC algorithm does not produce consistent samples. The shape of the estimated distribution varies significantly between individual runs of the algorithm. Moreover, this problem is not  fixed by increasing the size of the sample set. In contrast, this is not the case for the smTMCMC algorithm. The distribution presented in  \cref{fig:mTMCMC:pat1:1e5} does not change between individual runs of the algorithm or if  the number of samples is increased. This becomes evident by comparing  \cref{fig:mTMCMC:pat1:1e4} and \cref{fig:mTMCMC:pat1:1e5} where $10^4$ and $10^5$ samples were used, respectively.

The location of the ML estimate using CMA-ES and  the smTMCMC algorithm is marked in the diagonal histograms of  \cref{fig:mTMCMC:pat1:1e4} and \cref{fig:mTMCMC:pat1:1e5} with $\times$ and $\circ$, respectively. It can be observed that there is a discrepancy from the CMA-ES estimate in  \cref{fig:mTMCMC:pat1:1e4}, where $10^4$ samples were used.  In  \cref{fig:mTMCMC:pat1:1e5} the discrepancy is alleviated by increasing the sample size to $10^5$. With this observations, it becomes evident that the high probability areas of the posterior distribution are  populated.
In summary, we find that the smTMCMC, encompasses the properties of TMCMC and its extensions enable it to  effectively sample the posterior distribution of a challenging Pharmacodynamics model  using clinical data.

\section{Conclusion}

We have proposed a new population based sampling algorithm that combines the Transitional MCMC (TMCMC) algorithm with Langevin diffusion transition kernels. Instead of using isotropic diffusion in the TMCMC algorithm, the proposal scheme is  based on the time discretization of a Langevin diffusion that takes into consideration the geometry of the target distribution. Thus diffusion is taking place in a manifold that is defined either through the Hessian or the Fisher information matrix of the distribution.  

The adaptation to the local geometry of the distribution enhances the  sampling capabilities of the method. At the same time the requirement for positive definite Hessian of Fisher information matrix can not be guaranteed for general distributions. Even if the matrices are positive definite  some eigenvalues may be close to zero, severely affecting the efficiency of the algorithm. Such cases of nearly zero eigenvalues is present in systems with unidentifiable parameters.
We alleviate the problems associated with these matrices by introducing the  manifold TMCMC (mTMCMC). We successfully applied the proposed algorithm  using a Gaussian distribution with probability mass concentrated in the boundaries of the prior distribution as well as  with a {bimodal Gaussian mixture} distribution, greatly increasing  the sampling quality. Finally, we showcased  the ability of the proposed algorithm to sample challenging, multimodal distributions in the presence of unidentifiable manifolds using the posterior distribution in a Pharmacodynamics model, and compared its efficiency with respect to existing sampling methods.

The extension of the algorithm for priors other than uniform  is also conditionally possible. If the prior distribution is not log-concave then the proposed algorithm is not applicable. In the case of a log-concave  prior distribution, one has  to redefine the extended boundary technique, presented in  \cref{sec:covariance}, in the absence of a prior with compact support.

\begin{figure}
\begin{center}
	\includegraphics[width=1\textwidth]{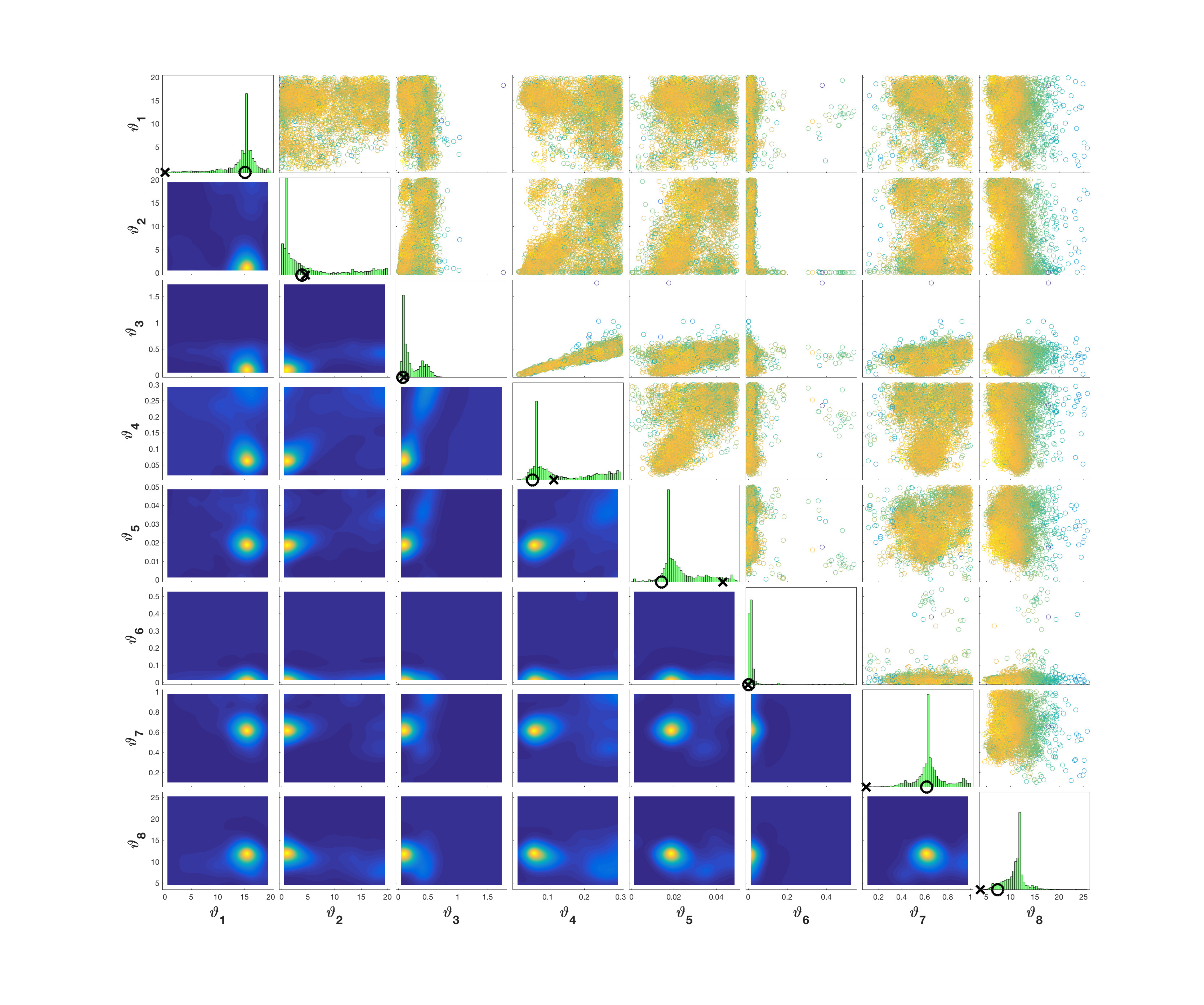}	
\end{center}
	\caption{Samples form the posterior distribution of Pharmacodynamics model \cref{eq:pharma} conditioned on  data from patient 1 and using the TMCMC algorithm {with $10^4$ samples}. The $\times$ and $\circ$ correspond to the ML estimates obtained by the CMA-ES and the TMCMC algorithm, repsectively.}	
    \label{fig:TMCMC:pat1}
\end{figure}

\begin{figure}
\centering
	\includegraphics[width=\textwidth]{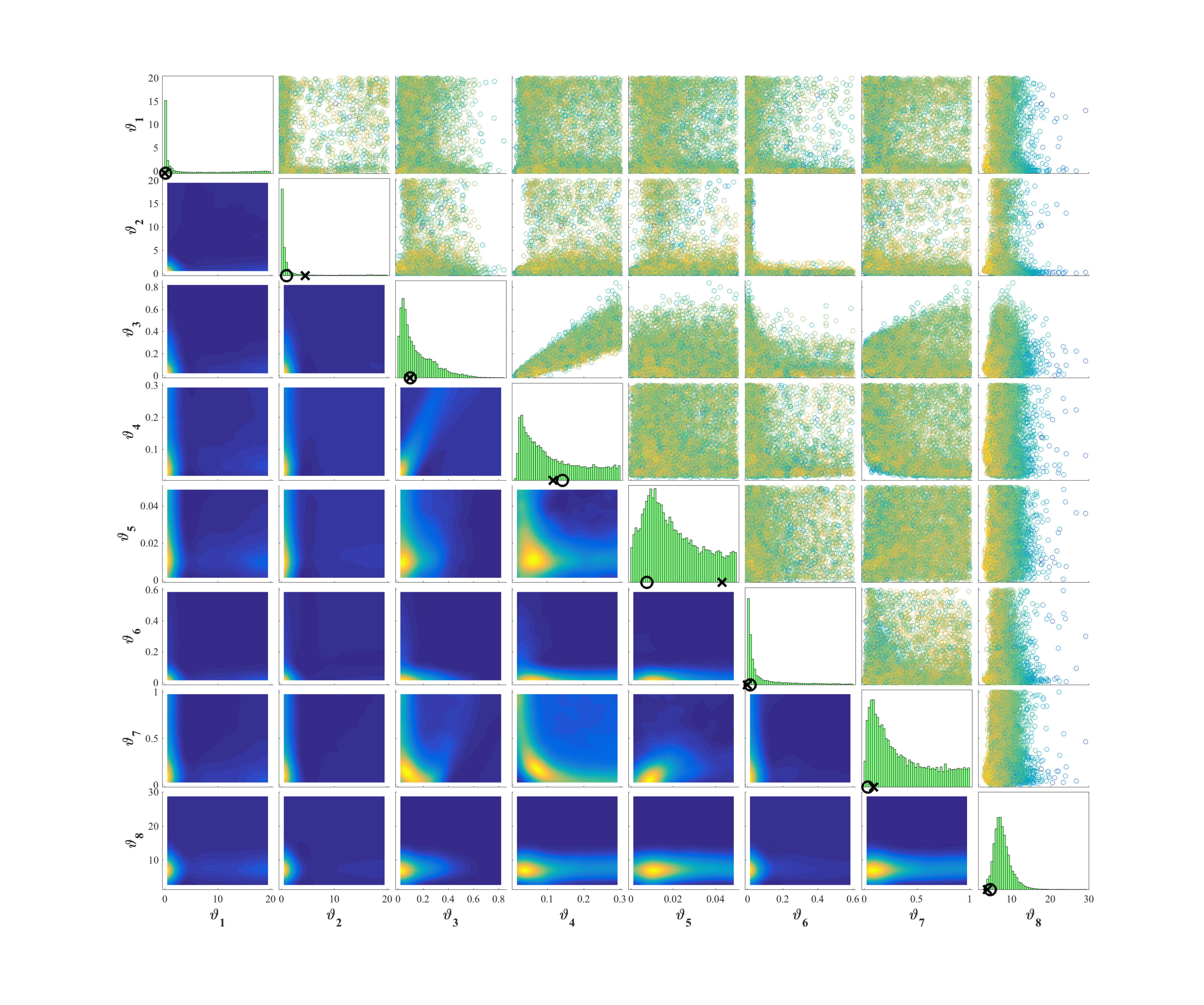}	
	\caption{Samples from the posterior distribution of Pharmacodynamics model \cref{eq:pharma} conditioned on  data from patient 1 and using the smTMCMC algorithm with $10^4$ samples. The $\times$ and $\circ$ correspond to the ML estimates obtained by the CMA-ES and the smTMCMC algorithm, repsectively. }	
    \label{fig:mTMCMC:pat1:1e4}
\end{figure}

\begin{figure}
\centering
	\includegraphics[width=\textwidth]{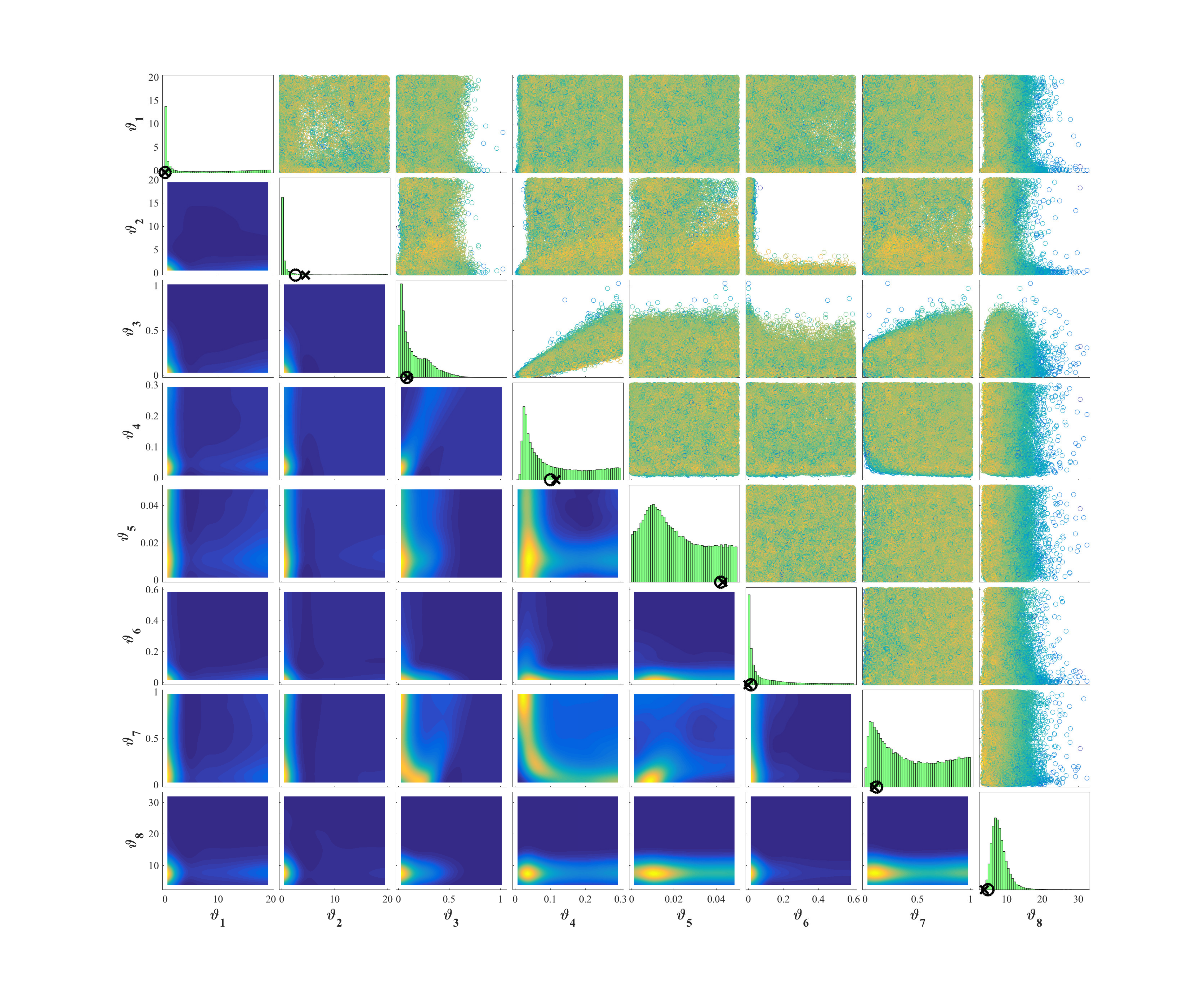}	
	\caption{Samples form the posterior distribution of Pharmacodynamics model \cref{eq:pharma} conditioned on  data from patient 1 and using the smTMCMC algorithm with $10^5$ samples. The $\times$ and $\circ$ correspond to the ML estimates obtained by the CMA-ES and the smTMCMC algorithm, repsectively. }	
    \label{fig:mTMCMC:pat1:1e5}
\end{figure}

\begin{figure}
\centering
	\includegraphics[width=0.5\textwidth]{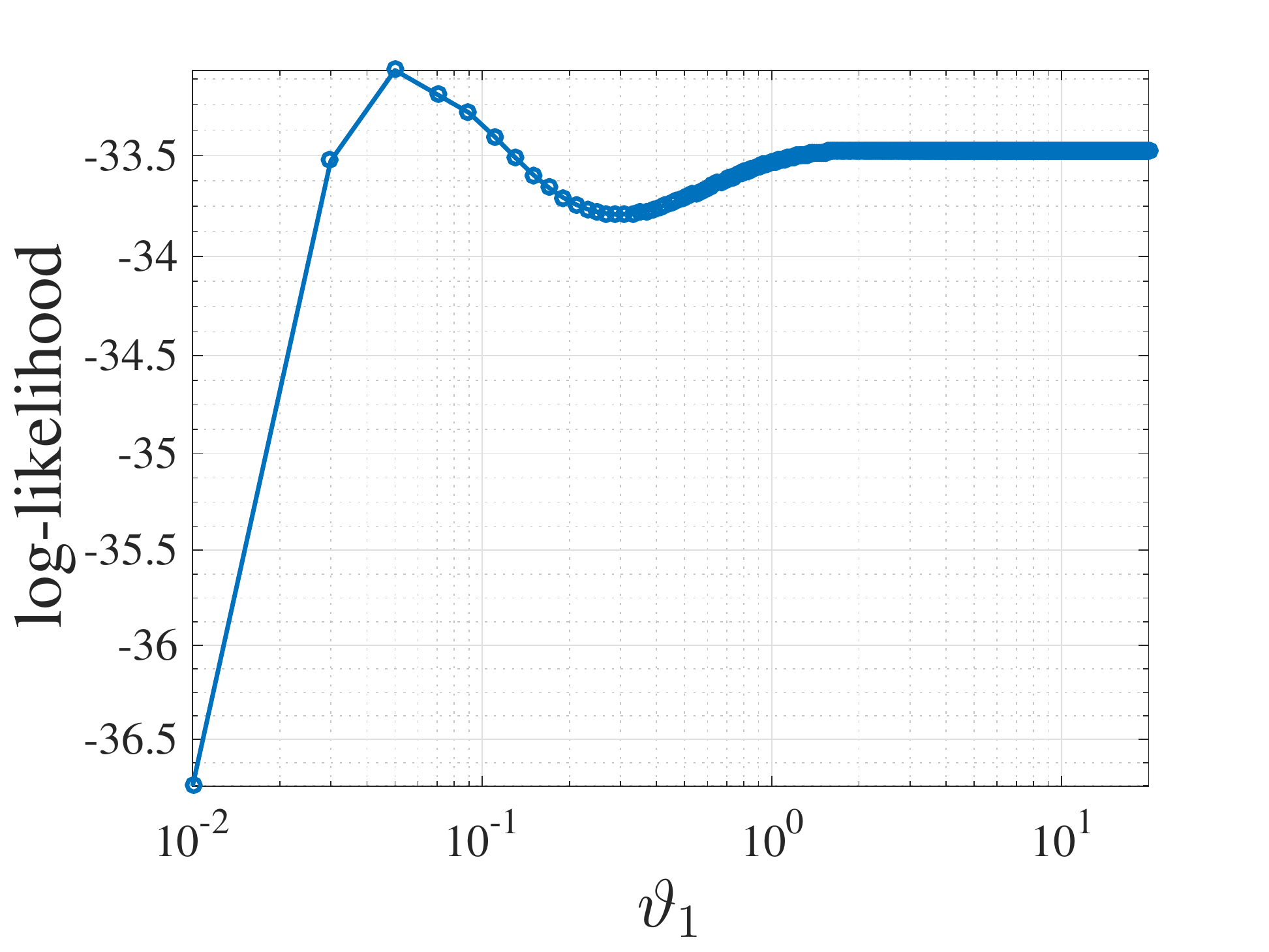}	
	\caption{The profile log-likelihood function \cref{eq:PL} for the $\vartheta_1$ parameter for the Pharmacodynamics model \cref{eq:pharma} conditioned on  data from patient 1. }
    \label{fig:PL:pat1}
\end{figure}

\section*{Acknowledgments}
We would like to thank Dr Benjamin Ribba  and Dr Francois Ducray for permission to use the data in the Pharmacodynamics model of  \cref{sec:pharma}. We would also like to thank the anonymous reviewers for their valuable comments. Finally, we would like to acknowledge the computational time at Swiss National Supercomputing Center (CSCS) under the project s659 and funding support from the European Research Council (Advanced Investigator Award no. 341117).

\newpage

\bibliographystyle{abbrv}
\bibliography{bibliography}

\newpage

\appendix

\section{Derivatives of Ordinary Differential Equations}\label{sec:ODEs}

In this section we show how the derivatives of the model output, $f$, with respect to the model parameters, $\varphi$, can be computed. These quantities appear in the evaluation of the derivative of the log-likelihood function with respect to the parameters, see \cref{eq:loglik:d1} and \cref{eq:loglik:d2}.

Let {$f:=f(Y,t;\varphi)$} be an observable function on the solution of the ODE system,
\begin{equation}
\begin{cases}
\dot Y(t) = G(Y;\varphi), \quad t\in[0,T] \\
Y(0) = g(\varphi)     \COMMA
\end{cases}
\label{eq:gen:ode}
\end{equation}
where $Y\in\REAL^{N_Y}$ and $\varphi\in\REAL^{N_{\varphi}}$ the vector of parameters. Here, we assume that $G$ is a smooth function such that all the derivatives used bellow are well defined. In  order to find the first and second order derivatives needed for the manifold algorithms, the following extended system must be solved,
\begin{equation}
\label{eq:extended:ode}
\begin{cases}
\dot Y = G(Y;\varphi) \\
\dot S_k = G_{k}^1(Y,S;\varphi), \quad  k=1,\ldots,N_\varphi\\
\dot H_{k,\ell} = G_{k,\ell}^2 (Y,S,H;\varphi), \quad  k,\ell=1,\ldots,N_\varphi \\
Y(0) = g(\varphi), \;\; S(0) = g_{k}^1(\varphi),  \;\; H(0) = g^2_{k,\ell}(\varphi) \COMMA
\end{cases}
\end{equation}
where $G_{k}^1$ and $G_{k,\ell}^2$ is the first total derivative of $G$ with respect to $\varphi_k$ and second total derivative of $G$ with respect to $\varphi_k$ and $\varphi_\ell$, respectively, and  $g^1_k  = \PARTIAL{\varphi_k} g$ and $g^2_{k,\ell}  = \PARTIALTWO{\varphi_k}{\varphi_\ell} g$.

The function $G_{k}^1 = D_{\varphi_k} G $ for $ k=1,\ldots,N_\varphi $ is given by, 
\begin{equation}
G_{k}^1 = A S_k + B_k \COMMA
\end{equation}
where 
\begin{equation}\label{eq:ode:a}
A_{ij} := A_{ij}(Y;\varphi)  = \frac{\partial}{\partial Y_j} G_i(Y;\varphi)    \COMMA
\end{equation}
for $i,j=1,\ldots,N_Y$ and
\begin{equation}
B_k := B_k(Y;\varphi) =  \frac{\partial}{\partial \varphi_k} G(Y;\varphi) \PERIOD
\end{equation}

The function $G_{k,\ell}^2 = D_{\varphi_k} D_{\varphi_\ell} G $ for $k,\ell=1,\ldots,N_\varphi$ is given by, 
\begin{equation}
G_{k,\ell}^2 =   AH_{k,\ell} + ( I \otimes S_k^\top) C_{k,\ell} (\ONE\otimes S_\ell) +  D_k S_\ell + D_\ell S_k + J_{k,\ell} \COMMA
\end{equation}
where $A$ is defined in \cref{eq:ode:a}, $\otimes$ is the Kronecker product, $I\in\REAL^{N_Y\times N_Y}$ the identity matrix and $\ONE=(1,\ldots,1)\in\REAL^{N_Y}$. The matrix $C$ is a block diagonal matrix  $C_{k,\ell} = \DIAG(C_{k,\ell,1},\ldots,C_{k,\ell,N_Y})$ with the block matrices given by
\begin{equation}
C_{k,\ell,i} := C_{k,\ell,i}(Y;\varphi) = \frac{\partial^2}{\partial Y_k \partial Y_\ell} G_i(Y;\varphi) \PERIOD
\end{equation}
The matrix $D_k$ and the vector $J_{k,\ell}$ are given by
\begin{equation}
D_{k,i,j} := D_{k,i,j}(Y;\varphi) =  \frac{\partial^2}{\partial \varphi_k  \partial Y_j } G_i(Y;\varphi) \COMMA
\end{equation}
\begin{equation}
J_{k,\ell} := J_{k,\ell}(Y;\varphi) =  \frac{\partial^2}{\partial \varphi_k \partial \varphi_\ell} G(Y;\varphi) \COMMA
\end{equation}
for $i,j=1,\ldots,N_Y$.

A Matlab function is provided which given  functions $G$ and $g$, by performing symbolic calculations, gives as output the functions $G_{k}^1$ and $G_{k,\ell}^2$ as Matlab function handles.

\section{Derivative of the transformed matrix} \label{appendix:derivative:G}

\begin{proposition}
Let $A\in\REAL^{n\times n}$ be a matrix that depends on a parameter $\vartheta$ and let $A=Q\Lambda Q^\top$ its eigendecomposition. Let $f(A) = Q f(\Lambda) Q^\top$ be a transformation of the matrix $A$. Then, it holds that
\begin{equation}
\frac{\partial f(A) }{\partial \vartheta}  = Q \Big (     J \circ      \big( Q^\top \, \frac{\partial A }{\partial \vartheta} \, Q \big )   \Big  ) Q^\top \COMMA
\end{equation}
where $\circ$ denotes the Hadamard product and $J$ is given by,
\begin{equation}
J_{i,j} = 
\begin{cases}
\frac{ f(\lambda_i) - f(\lambda_j) }{  \lambda_i - \lambda_j}, & i\neq j \\
\frac{\partial f(\lambda_i) }{\partial \lambda_i}, & i=j \PERIOD
\end{cases}
\end{equation}
\end{proposition}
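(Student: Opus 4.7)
The plan is to work out the derivative by directly differentiating both $A = Q\Lambda Q^\top$ and $f(A) = Q f(\Lambda) Q^\top$, and then eliminating the unknown $\dot Q$ between the two identities. Since the paper applies this result to symmetric positive (semi-)definite matrices such as the Hessian or Fisher information, I will assume $A$ is symmetric so that $Q$ is orthogonal, and (in the generic case) that the eigenvalues $\lambda_i$ are distinct; the confluent case $i=j$ will be recovered as a limit.

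First I would differentiate $Q^\top Q = I$ to get that $K := Q^\top \dot Q$ is antisymmetric, where dot denotes $\partial/\partial\vartheta$. Differentiating $A = Q\Lambda Q^\top$ and conjugating by $Q^\top(\cdot)Q$ yields
\begin{equation}
Q^\top \dot A \, Q = K\Lambda - \Lambda K + \dot\Lambda,
\end{equation}
whose off-diagonal entries read $(Q^\top \dot A \, Q)_{ij} = K_{ij}(\lambda_j - \lambda_i)$ for $i\neq j$, and whose diagonal entries read $(Q^\top \dot A \, Q)_{ii} = \dot\lambda_i$. Solving gives $K_{ij} = (Q^\top \dot A\,Q)_{ij}/(\lambda_j-\lambda_i)$ for $i\neq j$, which is the only information about $\dot Q$ that will be needed.

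Second, I would repeat the same conjugation for $f(A) = Q f(\Lambda) Q^\top$. Since $f(\Lambda)$ is diagonal with entries $f(\lambda_i)$, differentiation produces $\dot{f(\Lambda)}$ diagonal with entries $f'(\lambda_i)\dot\lambda_i$, and the same antisymmetric $K$ appears:
\begin{equation}
Q^\top \frac{\partial f(A)}{\partial \vartheta} Q = K f(\Lambda) - f(\Lambda) K + \dot{f(\Lambda)}.
\end{equation}
The off-diagonal entries equal $K_{ij}(f(\lambda_j)-f(\lambda_i))$; substituting the expression for $K_{ij}$ obtained in the previous step gives exactly $J_{ij}(Q^\top \dot A\,Q)_{ij}$ with $J_{ij} = (f(\lambda_i)-f(\lambda_j))/(\lambda_i-\lambda_j)$. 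The diagonal entries equal $f'(\lambda_i)\dot\lambda_i = J_{ii}(Q^\top \dot A\,Q)_{ii}$. Off-diagonal and diagonal combine into the Hadamard product formula $Q^\top \partial_\vartheta f(A)\,Q = J \circ (Q^\top \dot A\,Q)$, and conjugating back by $Q(\cdot)Q^\top$ finishes the proof.

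The main obstacle is conceptual rather than computational: the diagonal of $K = Q^\top \dot Q$ is not uniquely determined (it reflects a gauge freedom in choosing signs/phases of eigenvectors), so one must check that the answer only depends on the off-diagonal entries of $K$. This works out because the commutator $K f(\Lambda) - f(\Lambda) K$ has vanishing diagonal, and $\dot\Lambda$ is already diagonal, so no diagonal entry of $K$ ever enters the computation. The case of coinciding eigenvalues $\lambda_i = \lambda_j$ is handled by continuity: $J_{ij}$ is defined as $f'(\lambda_i)$ in that limit, and the formula extends by a standard density argument on the symmetric matrices with simple spectrum.
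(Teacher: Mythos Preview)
Your argument is correct and is the standard first-order perturbation computation for matrix functions of a symmetric matrix: differentiate the orthogonality relation to see that $K=Q^\top\dot Q$ is skew, differentiate the two eigendecompositions, and eliminate the off-diagonal entries of $K$ between them. The observation that the diagonal of $K$ never enters (because it sits in a commutator with a diagonal matrix) is exactly the right way to dispose of the gauge ambiguity, and the density/continuity remark handles repeated eigenvalues adequately for the purposes of this paper.

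As for comparison with the paper: the paper does not actually supply a proof. Its entire argument is the one-line reference ``For the proof see Section~2 in \cite{Aizu1963}.'' Your derivation is in fact the same perturbation-theoretic computation that underlies Aizu's result, so you have essentially reproduced what the cited source contains rather than taken a different route. In that sense there is nothing to contrast; your write-up is simply a self-contained version of the deferred proof.
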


\begin{proof}
For the proof see Section 2 in \cite{Aizu1963}.
\end{proof}

\end{document}